\newcommand{\longversion}[1]{#1}
\newcommand{\shortversion}[1]{}
 \newcommand{\mymath}[1]{$#1$}
 \newcommand{\mymath}[1]{\begin{align*}#1\end{align*}}
\newenvironment{subxarray}{%
  \vcenter\bgroup
  \Let@ \restore@math@cr \default@tag
  \baselineskip\fontdimen10 \scriptfont\tw@
  \advance\baselineskip\fontdimen12 \scriptfont\tw@
  \lineskip\thr@@\fontdimen8 \scriptfont\thr@@
  \lineskiplimit\lineskip
  \ialign\bgroup\hfil
    $\m@th\scriptstyle##$&$\m@th\scriptstyle{}##$\hfil\crcr
}{%
  \crcr\egroup\egroup
}
\newcommand{\yes}{{yes}}
\newcommand{\no}{{no}}
\newcommand{\yesinstance}{\yes-instance\xspace}
\newcommand{\bran}[1]{branchable\xspace}
\newtheorem{theorem}{Theorem}
\newtheorem{lemma}{Lemma}[section]
\newtheorem{definition}{Definition}[section]
\newtheorem{proposition}{Proposition}[section]
\declaretheoremstyle[headfont=\bf]{normalhead}
\declaretheorem[style=normalhead]{remark}
\date{}
\newcommand{\defparproblem}[4]{
  \longversion{\medskip}
  \shortversion{\smallskip}
\noindent\fbox{
  \begin{minipage}{0.96\textwidth}
  \begin{tabular*}{\textwidth}{@{\extracolsep{\fill}}lr} #1  & {\bf{Parameter:}} #3 \\ \end{tabular*}
  {\bf{Input:}} #2  \\
  {\bf{Question:}} #4
  \end{minipage}
  }
  \longversion{\medskip}
  \shortversion{\smallskip}
}
\newcommand{\defproblem}[3]{
	\longversion{\medskip}
	\shortversion{\smallskip}
\noindent\fbox{
  \begin{minipage}{0.96\textwidth}
  \begin{tabular*}{\textwidth}{@{\extracolsep{\fill}}lr} #1 \\ \end{tabular*}
  {\bf{Input:}} #2  \\
  {\bf{Question:}} #3
  \end{minipage}
  }
  \longversion{\medskip}
  \shortversion{\smallskip}
}
\newcommand{\defoptproblem}[3]{
	\longversion{\medskip}
	\shortversion{\smallskip}
	\noindent\fbox{
		\begin{minipage}{0.96\textwidth}
			\begin{tabular*}{\textwidth}{@{\extracolsep{\fill}}lr} #1 \\ \end{tabular*}
			{\bf{Input:}} #2  \\
			{\bf{Output:}} #3
		\end{minipage}
	}
	\longversion{\medskip}
	\shortversion{\smallskip}
}
\newcommand{\NP}{\text{\normalfont  NP}}
\newcommand{\FPT}{\text{\sf FPT}}
\newcommand{\W}[1][xxxx]{\text{\normalfont W[#1]}}
\newcommand{\fdel}{$\Pi$-\textsc{Vertex Deletion}}
\newcommand{\abGraph}{$(r,\ell)$-{graph}}
\newcommand{\abpartization}{{\sc Vertex $(r,\ell)$-Partization}}
\newcommand{\phisub}{{\sc  $\Phi$-Sub\-set}}
\newcommand{\dphisub}{\textsc{De\-ci\-sion}  \phisub}
\newcommand{\phiext}{{\sc  $\Phi$-Ex\-ten\-sion}}
\newcommand{\pphiext}{\textsc{Per\-mis\-sive} \phiext}
\newcommand{\sepfam}[3]{$(#1,#2,#3)$-set-inclusion-family} 
\newcommand{\sepfamwPlural}{set-inclusion-families}
\newcommand{\Oh}{{\mathcal{O}}}
\newcommand{\bitsize}{N}
\begin{document}

\title{Exact Algorithms via Monotone Local Search}

\author{Fedor V.\ Fomin\thanks{University of Bergen, Norway. 
     \texttt{\{fomin|daniello\}@ii.uib.no}} 
\and Serge Gaspers\thanks{The University of New South Wales, Sydney, Australia. \texttt{sergeg@cse.unsw.edu.au}} 
\thanks{Data61 (formerly: NICTA), CSIRO, Sydney, Australia} \addtocounter{footnote}{-3}
\and Daniel Lokshtanov\footnotemark \addtocounter{footnote}{-1}
 \and  Saket Saurabh\footnotemark \addtocounter{footnote}{2}~\thanks{The Institute of Mathematical Sciences, Chennai, India. \texttt{saket@imsc.res.in}.
 }}

 \maketitle

\begin{abstract} 
We give a new general approach for designing exact exponential-time algorithms for {\em subset problems}. In a subset problem the input implicitly describes a family of sets over a universe of size $n$ and  the task is to determine whether the family contains at least one set. A typical example of a subset problem is \textsc{Weighted $d$-SAT}. Here, the input is a CNF-formula with clauses of size at most $d$, and an integer $W$. The universe is the set of variables and the variables have integer weights. The family contains all the subsets $S$ of variables such that the total weight of the variables in $S$ does not exceed $W$, and setting the variables in $S$ to 1 and the remaining variables to 0 satisfies the formula.
Our approach is based on ``monotone local search'', where the goal is to extend a partial solution to a solution by adding as few elements as possible. More formally, in the extension problem we are also given as input a subset $X$ of the universe and an integer $k$. The task is to determine whether one can add at most $k$ elements to $X$ to obtain a set in the (implicitly defined) family. Our main result is that a  $c^kn^{\Oh(1)}$ time algorithm for the extension problem immediately yields a randomized algorithm for finding a solution of any size with running time $\Oh((2-\frac{1}{c})^n)$. 

In many cases, the extension problem can be reduced to simply finding a solution of size at most $k$. Furthermore, efficient algorithms for finding small solutions have been extensively studied in the field of parameterized algorithms. Directly applying these algorithms, our theorem yields in one stroke significant improvements over the best known exponential-time algorithms for several well-studied problems, including \textsc{$d$-Hitting Set},  \textsc{Feedback Vertex Set}, \textsc{Node Unique Label Cover}, and \textsc{Weighted $d$-SAT}. Our results demonstrate an interesting and very concrete connection between parameterized algorithms and exact exponential-time algorithms.

We also show how to derandomize our algorithms at the cost of a subexponential multiplicative factor in the running time. Our derandomization is based on an efficient construction of a new pseudo-random object that might be of independent interest. Finally, we extend our methods to establish new combinatorial upper bounds and develop enumeration algorithms.  

\end{abstract}

\thispagestyle{empty}

\newpage
\pagestyle{plain}
\setcounter{page}{1}

\section{Introduction}

In the area of exact exponential-time algorithms, the objective is to design algorithms that outperform brute-force for computationally intractable problems. Because the problems are intractable we do not hope for polynomial time algorithms. Instead the aim is to allow super-polynomial time and design algorithms that are significantly faster than brute-force. For {\em subset problems} in \NP, where the goal is to find a subset with some specific properties in a universe on $n$ elements, the brute-force algorithm that tries all possible solutions has running time $2^nn^{\Oh(1)}$. Thus our goal is typically to design an algorithm with running time $c^nn^{\Oh(1)}$ for $c < 2$, and we try to minimize the constant $c$. We refer to the textbook of Fomin and Kratsch~\cite{Fomin:2010mo} for an introduction to the field.

%
%
 %
%
%
In the area of parameterized algorithms (see~\cite{cygan2015parameterized}), the goal is to design efficient algorithms for the ``easy'' instances of computationally intractable problems. Here the running time is measured not only in terms of the input size $n$, but also in terms of a parameter $k$ which is expected to be small for ``easy'' instances. For subset problems the parameter $k$ is often chosen to be the size of the solution sought for, and many subset problems have parameterized algorithms that find a solution of size $k$ (if there is one) in time $c^kn^{\Oh(1)}$ for a constant $c$, which is often much larger than $2$.


In this paper we address the following question: Can an {\em efficient} algorithm for the {\em easy} instances of a hard problem yield a {\em non-trivial} algorithm for {\em all} instances of that problem? More concretely, can parameterized algorithms for a problem be used to speed up exact exponential-time algorithms for the same problem? Our main result is an affirmative answer to this question: we show that, for a large class of problems, an algorithm with running time $c^kn^{\Oh(1)}$ for any $c > 1$ immediately implies an exact  algorithm with running time $\Oh((2-\frac{1}{c})^{n+o(n)})$ for the problem. Our main result, coupled with the fastest known parameterized algorithms, gives in one stroke the first non-trivial exact algorithm for a number of problems, and simultaneously significantly improves over the best known exact algorithms for several well studied problems; see Table~\ref{fig:vertexresults} for a non-exhaustive list of corollaries. Our approach is also useful to prove upper bounds on the number of interesting combinatorial objects, and to design efficient algorithms that enumerate these objects; see Table~\ref{fig:enumresults}.


At this point it is worth noting that a simple connection between algorithms running in time $c^kn^{\Oh(1)}$ for $c < 4$ and exact exponential-time algorithms beating $\Oh(2^n)$ has been known for a long time.
%
For subset problems, where we are looking for a specific subset of size $k$ in a universe of size $n$, to beat $\Oh(2^n)$ one only needs to outperform brute-force for values of $k$ that are very close to $n/2$. Indeed, for $k$ sufficiently far away from $n/2$, trying all subsets of size $k$ takes time ${n \choose k}n^{\Oh(1)}$ which is significantly faster than $\Oh(2^n)$.
%
%
%
Thus, if there is an algorithm deciding whether there is a solution of  size at most $k$ in time $c^k n^{\Oh(1)}$ for some $c<4$, we can deduce that the problem can be solved in time $\Oh((2-\epsilon)^n)$ for an $\epsilon > 0$ that depends only on $c$. On the other hand, it is easy to see that this trade-off between $c^k$ and ${n \choose k}$ does not yield any improvement over $2^n$ when $c \geq 4$. Our main result significantly outperforms the algorithms obtained by this trade-off for every value of $c > 1$, and further yields better than $\Oh(2^n)$ time algorithms even for $c \geq 4$.

As a concrete example, consider \longversion{the }{\sc Interval Vertex Deletion}\longversion{ problem}. Here the input is a graph $G$ and an integer $k$ and the task is to determine whether $G$ can be turned into an interval graph by deleting $k$ vertices. The fastest parameterized algorithm for the problem \longversion{is due to Cao  }\cite{Cao8kinterval}\longversion{ and} runs in time $8^k n^{\Oh(1)}$. Combining this algorithm with the simple trade-off scheme described above does not outperform brute-force, since $8\ge 4$. The fastest previously known exponential-time algorithm \longversion{for the problem is due to Bliznets et al.~}\cite{BliznetsFPV13}\longversion{, and} runs in time $\Oh((2-\epsilon)^n)$ for $\epsilon < 10^{-20}$. On the other hand, combining the parameterized algorithm, as a black box, with our main result immediately yields a $1.875^{n+o(n)}$ time algorithm for \longversion{{\sc Interval Vertex Deletion}}\shortversion{the problem}.

\medskip\noindent\textbf{Our results.}
We need some definitions in order to state our results precisely. 
We define an {\em implicit set system} as a function $\Phi$ that takes as input a string $I \in \{0,1\}^*$ and outputs a set system $(U_I, {\cal F}_I)$, where $U_I$ is a universe  and ${\cal F}_I$ is a collection of subsets of $U_I$.  The string $I$ is referred to as an {\em instance}  and we denote by $|U_I| = n$ the size of the universe and by $|I|=\bitsize$ the size of the instance. 
We assume that $\bitsize\ge n$. The implicit set system $\Phi$ is said to be {\em polynomial time computable} if (a) there exists a polynomial time algorithm that given $I$ produces $U_I$, and (b) there exists a polynomial time algorithm that given $I$, $U_I$ and a subset $S$ of $U_I$ determines whether $S \in{\cal F}_I$.
All implicit set systems discussed in this paper are polynomial time computable, except for the minimal satisfying assignments of $d$-CNF formulas which are not polynomial time computable unless P=NP \cite{YatoS03}.

An implicit set system $\Phi$ naturally leads to some problems about\longversion{ the set system} $(U_I, {\cal F}_I)$. Find a set in $\mathcal{F}_I$. Is $\mathcal{F}_I$ non-empty? What is the cardinality of $\mathcal{F}_I$? In this paper we will primarily focus on the first and last problems. Examples of implicit sets systems include the set of all feedback vertex sets of a graph of size at most $k$, the set of all satisfying assignments of a CNF formula of weight at most $W$, and the set of all minimal hitting sets of a set system. Next we formally define subset problems.

\defoptproblem{\phisub}{An instance $I$}{A set $S \in {\cal F}_I$ if one exists.}

\noindent
An example of a subset problem is \textsc{Min-Ones} $d$-\textsc{Sat}. Here for an integer $k$ and a propositional formula $F$ in conjunctive normal form (CNF) where each clause has at most $d$ literals,
the task is to determine whether $F$ has a satisfying assignment with Hamming weight \shortversion{(number of $1$s) }at most $k$\longversion{, i.e., setting at most $k$ variables to 1}. 
In our setting, the instance $I$ consists of the input formula $F$ and the integer $k$, encoded as a string over $0$s and  $1$s. The implicit set system  $\Phi$ is a function from $I$ to $(U_I, {\cal F}_I)$, where $U_I$ is the set of variables of $F$, and ${\cal F}_I$ is the set of all satisfying assignments of Hamming weight at most $k$.
 
%
%

Our results will rely on parameterized algorithms for a generalization of subset problems\shortversion{, called {\em extension problems},} where we are also given as input a set $X \subseteq U_I$ and an integer $k$ and the question is whether it is possible to obtain a set from $\mathcal{F}_I$ by adding at most $k$ elements from $U_I$ to $X$.\longversion{ We give a formal definition of such problems, which we call {\em extension problems}.}

\defproblem{\phiext}{An instance $I$, a set $X \subseteq U_I$, and an integer $k$.}
{Does there exists a subset $S \subseteq (U_I \setminus X)$ such that $S \cup X \in {\cal F}_I$ and $|S| \leq k$?}

\noindent
Our first main result gives exponential-time randomized algorithms for \phisub\ based on single-exponential parameterized algorithms for \phiext\ with parameter $k$. Our randomized algorithms are Monte Carlo algorithms with one-sided error. On \no{}-instances they always return \no{}, and on \yes-instances they return \yes\ (or output a certificate) with probability $>\frac{1}{2}$.

\begin{restatable}{theorem}{thmMainOne}
	\label{thm:main1}
	If there exists an algorithm for \phiext{} with running time $c^k\bitsize^{\Oh(1)}$ then there exists a randomized algorithm for \phisub\ with running time $(2-\frac{1}{c})^{n}\bitsize^{\Oh(1)}$.
\end{restatable}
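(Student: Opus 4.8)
The plan is to combine a random sampling step with the assumed extension algorithm, balancing the two against each other. Given an instance $I$ with universe $U_I$ of size $n$, I would fix a parameter $t$ (to be chosen later, roughly $t = \alpha n$ for a constant $\alpha$ depending on $c$) and sample a subset $X \subseteq U_I$ uniformly at random among all subsets of size exactly $t$. Then I would run the assumed $c^k \bitsize^{\Oh(1)}$-time algorithm for \phiext{} on $(I, X, k)$ with $k = n - t$; this clearly always succeeds in deciding the instance when it runs, but the point is to argue that with reasonable probability we can afford to run it with a much \emph{smaller} budget $k$. The key observation is that if $I$ is a \yesinstance, then there is some solution $S^\star \in \mathcal{F}_I$, and we only need $X$ to lie inside $S^\star$ for the extension problem $(I, X, |S^\star| - |X|)$ to be a \yesinstance; more precisely, if $|X \cap S^\star| $ is large, then few elements of $S^\star$ remain to be added.

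Concretely, I would run the extension algorithm with budget $k$ only on the event that $X \subseteq S^\star$ fails to capture at most $k$ missing elements — but since we do not know $S^\star$, the clean way is: repeat the following many times. Sample $X$ of size $t$ uniformly at random, then run the extension algorithm on $(I, X, k)$ for a fixed small budget $k$. If $I$ is a \yesinstance\ with witness $S^\star$ of size $s$, then conditioned on $X \subseteq S^\star$ (equivalently, on $X$ being a uniformly random $t$-subset of $S^\star$), the extension instance is a \yesinstance\ as long as $k \ge s - t$. Taking $s$ to be as large as $n$ in the worst case, we want $\Pr[X \subseteq S^\star] \ge \binom{s}{t} / \binom{n}{t}$, which for $s$ close to $n$ is not too small; we then choose $t$ and $k = n - t$ so that the quantity $\binom{n}{t}^{-1} \cdot c^k$ — the number of repetitions times the cost per repetition — is minimized. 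A standard Stirling-type estimate shows the optimum of $\binom{n}{\alpha n} \cdot c^{(1-\alpha)n}$ over $\alpha \in [0,1]$ is $(2 - \tfrac{1}{c})^n$ up to polynomial and subexponential factors, achieved at $\alpha = \tfrac{1}{2-1/c}$ — this is exactly the binomial identity $\sum_{j} \binom{n}{j} c^{n-j} = (1+c)^n$ localized at its peak, rescaled, giving the claimed base $2 - 1/c$.

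For the randomized algorithm to have one-sided error as required, I would take the number of independent repetitions to be a suitable polynomial multiple of $\binom{n}{t}^{-1}\binom{s}{t}$ summed appropriately over the unknown $s$ (or, more simply, enough repetitions that the failure probability is below $\tfrac12$ for every possible witness size), so that on a \yesinstance\ at least one repetition hits an $X \subseteq S^\star$ and the extension algorithm then returns \yes; on a \noinstance\ the extension algorithm can never return \yes, so we always output \no. The total running time is (number of repetitions) times $c^k \bitsize^{\Oh(1)}$, which by the choice of $t$ is $(2 - \tfrac1c)^n \bitsize^{\Oh(1)}$.

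The main obstacle, and the step requiring care, is the handling of the unknown witness size $s$: a single fixed choice of $t$ is tuned to one value of $s$, but the algorithm must work for all of them simultaneously, so either one iterates over all $n{+}1$ candidate values of $s$ (incurring only a polynomial overhead and picking the best $t, k$ for each, then taking the worst running time, which is still the claimed bound since $(2-\tfrac1c)^n$ dominates), or one argues that a single well-chosen $t$ already works for all $s \ge t$ with adequate probability. Verifying that the overhead from this case analysis and from boosting the success probability stays within $\bitsize^{\Oh(1)}$, and that the exponential base is exactly $2 - \tfrac1c$ rather than something slightly larger, is where the real work lies; the rest is the routine Stirling estimate for $\binom{n}{\alpha n}$.
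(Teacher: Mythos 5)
Your high-level plan coincides with the paper's: sample a uniform $t$-subset $X$, run the extension algorithm, repeat, and handle the unknown witness size $s$ by iterating over its $n+1$ possible values with a $t$ tuned to each. The paper does exactly this (Lemma~\ref{lemma:subext}). However, the quantitative analysis you sketch is not correct, and fixing it is precisely the nontrivial work that the paper isolates into a separate technical lemma (Lemma~\ref{lem:technical}).

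Concretely, for a witness of size $s$ the success probability of one sample is $\binom{s}{t}/\binom{n}{t}$, so the quantity to optimize is
\begin{align*}
\frac{\binom{n}{t}}{\binom{s}{t}}\,c^{s-t},
\end{align*}
minimized over $t\le s$ and then maximized over $s$. You drop the $\binom{s}{t}$ factor: you write ``$\binom{n}{t}^{-1}\cdot c^k$,'' which even read as $\binom{n}{t}\cdot c^k$ is missing the denominator. The Stirling estimate you then invoke is false: the maximum of $\binom{n}{\alpha n}c^{(1-\alpha)n}$ over $\alpha\in[0,1]$ is $(1+c)^{n+o(n)}$, attained at $\alpha=\tfrac{1}{1+c}$ (this is what the binomial sum $\sum_j\binom{n}{j}c^{n-j}=(1+c)^n$ localizes to, and for $c>1$ it is worse than $2^n$), while its minimum is just $1$ at $\alpha=1$; neither equals $(2-\tfrac1c)^n$, and there is no ``rescaling'' that turns $1+c$ into $2-\tfrac1c$. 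Relatedly, $\alpha=\tfrac{1}{2-1/c}=\tfrac{c}{2c-1}$ is not the optimal sampling fraction $t/n$ but the worst-case witness-size fraction $s/n$; the correct sampling fraction is $\tfrac{c-1}{2c-1}$, obtained by enforcing $\tfrac{s-t}{n-t}=\tfrac1c$. And ``taking $s$ as large as $n$ in the worst case'' is backwards: with $s=n$ the success probability is $1$ and the instance is trivial, so $s=n$ is the \emph{easy} case; the hard case is $s\approx\tfrac{c}{2c-1}n$. The optimization $\max_s\min_{t\le s}\tfrac{\binom{n}{t}}{\binom{s}{t}}c^{s-t}\le(2-\tfrac1c)^n n^{\Oh(1)}$ does hold, but establishing it requires keeping the $\binom{s}{t}$ term, differentiating the resulting entropy-type expression, and checking the boundary cases; your proposal asserts the conclusion with an incorrect supporting calculation, so it has a genuine gap at its analytical core.
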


\noindent
Our second main result is that the algorithm of Theorem~\ref{thm:main1} can be derandomized at the cost of a subexponential factor in $n$ in the running time.

\begin{restatable}{theorem}{thmMainTwo}
	\label{thm:main2}
	If there exists an algorithm for \phiext{} with running time $c^k\bitsize^{\Oh(1)}$ then there exists an algorithm for \phisub\ with running time $(2-\frac{1}{c})^{n+o(n)} \bitsize^{\Oh(1)}$.
\end{restatable}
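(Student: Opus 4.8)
The plan is to derandomize the algorithm behind Theorem~\ref{thm:main1} by replacing its only random ingredient with an explicit pseudo-random object. Recall that, for a guessed solution size $s$, that algorithm fixes a parameter $t=t(s)\le s$ minimizing $\big(\binom{n}{t}/\binom{s}{t}\big)c^{\,s-t}$ (whose maximum over $s$ is $(2-\tfrac1c)^{n}$ up to a polynomial factor), and then repeats $\Oh(\binom{n}{t}/\binom{s}{t})$ times the step ``sample a uniformly random $t$-subset $X\subseteq U_I$ and run the $c^{k}\bitsize^{\Oh(1)}$-time \phiext{} algorithm on $(I,X,s-t)$''; a single sample works precisely when $X$ is contained in a fixed solution $S^\star\in\mathcal{F}_I$ of size $s$. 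First I would isolate the combinatorial object this asks for: a family $\X$ of $t$-subsets of $U_I$ such that \emph{every} $s$-subset of $U_I$ contains at least one member of $\X$; call it an \sepfam{n}{t}{s}. A double-counting argument gives $|\X|\ge\binom{n}{t}/\binom{s}{t}$, exactly the number of random trials being replaced, so the theorem reduces to constructing such families of near-optimal size, deterministically and quickly.

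The heart of the argument is therefore the following statement, which is the announced pseudo-random object: there is a deterministic algorithm that, given $n$ and $0\le t\le s\le n$, outputs an \sepfam{n}{t}{s} of size $\binom{n}{t}/\binom{s}{t}\cdot 2^{o(n)}$ in time $\binom{n}{t}/\binom{s}{t}\cdot 2^{o(n)}\cdot n^{\Oh(1)}$. Granting it, Theorem~\ref{thm:main2} follows by bookkeeping: for each $s\in\{0,\dots,n\}$ I build the family $\X$ for $t=t(s)$ and call the \phiext{} algorithm on $(I,X,s-t)$ for every $X\in\X$, reporting a solution as soon as one call succeeds (the \yes/\no\ answer upgrades to a witness by the usual self-reduction). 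Correctness: if $\mathcal{F}_I\ne\emptyset$, some $S^\star\in\mathcal{F}_I$ has size $s$, some $X\in\X$ satisfies $X\subseteq S^\star$, and then $S^\star\setminus X$ certifies that $(I,X,s-t)$ is a \yes{}-instance. The running time is $\sum_s\big(\binom{n}{t(s)}/\binom{s}{t(s)}\big)c^{\,s-t(s)}\cdot 2^{o(n)}\bitsize^{\Oh(1)}$, i.e.\ the bound of Theorem~\ref{thm:main1} multiplied by $2^{o(n)}$, hence $(2-\tfrac1c)^{n+o(n)}\bitsize^{\Oh(1)}$.

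For the construction itself, the naive option --- a greedy set cover over all $t$-subsets of $U_I$ --- is near-optimal in size but runs in time $\Omega(\binom{n}{t}\binom{n}{s})$, which can be $\approx 4^{n}$ and is useless. Instead I would use a block/tensoring construction. Split $U_I$ into $r$ blocks $B_1,\dots,B_r$ of size $\approx n/r$, with $r=r(n)\to\infty$ slowly (say $r=\lceil\sqrt n\,\rceil$). Any candidate $S^\star$ meets $B_i$ in some number $s_i$ of elements with $\sum_i s_i=s$; there are only $2^{\Oh(r\log n)}=2^{o(n)}$ such profiles. For each profile, choose block targets $t_i$ with $\sum_i t_i=t$ and $t_i$ as balanced relative to $s_i$ as rounding allows; inside one block of size $\approx n/r$ we can afford brute force, so a greedy set cover builds an \sepfam{n/r}{t_i}{s_i} of size $\binom{n/r}{t_i}/\binom{s_i}{t_i}\cdot\Oh(n/r)$ (the set-cover overhead being only $\Oh(\log\binom{n/r}{s_i})=\Oh(n/r)$) in time $2^{\Oh(n/r)}$. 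For a fixed profile take all unions $X_1\cup\dots\cup X_r$ with $X_i$ in the $i$-th block family, and over all profiles take the union of these collections. The result is an \sepfam{n}{t}{s}, since $S^\star$ realizes some profile and, within each block, contains some $X_i$, hence contains $\bigcup_i X_i$. Its size is at most (number of profiles) $\cdot\prod_i\big(\binom{n/r}{t_i}/\binom{s_i}{t_i}\big)\cdot\Oh(n/r)^{r}$, and construction time is the output size times $n^{\Oh(1)}$ plus $2^{\Oh(n/r)}=2^{o(n)}$ for the per-block greedy runs.

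The main obstacle is not the reduction but this last size bound: one must show $\prod_i\binom{n/r}{t_i}/\binom{s_i}{t_i}\le\binom{n}{t}/\binom{s}{t}\cdot 2^{o(n)}$ while the profile count and the $\Oh(n/r)^{r}$ overhead stay $2^{o(n)}$. The second and third requirements just pin down the admissible range $r=\omega(1)\cap o(n/\log n)$. The first is the only genuine inequality: whatever imbalance $S^\star$ has across blocks, the rounded targets $t_i$ must satisfy both $\sum_i t_i=t$ and the product bound, which I expect to follow from a convexity/majorization estimate on binomial coefficients (equivalently, an entropy calculation, using that $t_i/s_i\approx t/s$). Once that lemma is in hand, everything else is routine.
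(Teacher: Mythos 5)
Your first step --- reducing the derandomization to an explicit, near-optimally small construction of what the paper calls an \sepfam{n}{s}{t} (universe $n$, covered-set size $s$, covering-set size $t$) --- is exactly the paper's own reduction, including the observation that $\kappa(n,s,t)=\binom{n}{t}/\binom{s}{t}$ is the target size and that a greedy set cover achieves this size but at unacceptable exponential cost. Where you diverge is in the fast construction. The paper does not fix a single partition of $U_I$: it builds a pairwise-independent family of $\Oh(n^2)$ hash functions $U_I\to[b]$ with $b=\lceil\log n\rceil$ (Proposition~\ref{prop:pairInd}) and argues via Chebyshev that, for every fixed $s$-set $S^\star$, some hash function splits both $U_I$ and $S^\star$ nearly evenly into $b$ buckets; it therefore only needs per-bucket families for near-balanced profiles $p_i=p/b\pm\sqrt n\,b$, and it absorbs the rounding excess $\sum_i\lceil\beta p_i\rceil-q$ with a small deletion set $D$ of size $\le b$ rather than readjusting per-bucket targets. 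You replace the hash family by one fixed equipartition into $r=\Theta(\sqrt n)$ blocks and pay by enumerating \emph{all} profiles. This is a genuine simplification in machinery (no pairwise independence, no Chebyshev), but it shifts the weight onto a stronger combinatorial estimate: you need $\prod_i\binom{n/r}{t_i}/\binom{s_i}{t_i}\le\kappa(n,s,t)\cdot2^{o(n)}$ for \emph{arbitrary} profiles $(s_1,\dots,s_r)$, not just near-balanced ones. That estimate does hold --- Vandermonde gives $\prod_i\binom{n/r}{t_i}\le\binom{n}{t}$ outright, and the standard entropy bound $\binom{s_i}{t_i}\ge 2^{s_iH(t_i/s_i)}\cdot s_i^{-\Oh(1)}$ together with $\sum_i s_iH(t/s)=sH(t/s)$ yields $\prod_i\binom{s_i}{t_i}\ge\binom{s}{t}\cdot n^{-\Oh(r)}$ once $t_i/s_i\approx t/s$, with the $\Oh(r)$ worth of rounding adjustments to reach $\sum_i t_i=t$ costing a further $n^{\Oh(r)}=2^{o(n)}$ --- but it is the crux of the construction and cannot be waved off as an ``expected convexity/majorization estimate.'' Note in particular that Vandermonde by itself runs the \emph{wrong} way for the denominator; the saving ingredient is the ratio constraint $t_i/s_i\approx t/s$, which must be enforced by the construction (as you propose) and then explicitly exploited in the calculation, including in the boundary cases of very small $s_i$. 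Modulo carrying out that lemma carefully, your route is sound and delivers the same theorem; the paper's hashing step buys near-balanced profiles and hence a softer size calculation at the price of an extra pseudo-random ingredient, while your fixed-partition variant is more self-contained but needs the profile-uniform product bound.
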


\noindent
To exemplify the power of these theorems, we give a few examples of applications. We have already seen the first example, the $1.875^{n+o(n)}$ time algorithm for {\sc Interval Vertex Deletion}. Let us now consider the \textsc{Min-Ones $d$-SAT} problem described above. 

A simple branching algorithm solves the extension problem for \textsc{Min-Ones $d$-SAT} as follows. Suppose we have already selected a set $X$ of variables to set to 1, remove all clauses containing a positive literal on $X$, and remove negative literals on $X$ from the remaining clauses. Start from the all-$0$ assignment on the remaining variables, with a budget for flipping $k$ variables from 0 to 1. As long as there is an unsatisfied clause, guess which one of the at most $d$ variables in this clause should be flipped from 0 to 1, and for each proceed recursively with the budget decreased by one. The recursion tree of this algorithm has depth at most $k$, and each node of the recursion tree has at most $d$ children, thus this algorithm terminates in time $d^k\cdot n^{\Oh(1)}$.

Hence, by Theorem~\ref{thm:main2},  \textsc{Min-Ones $d$-SAT} can be solved in time  $(2-\frac{1}{d})^{n+o(n)}$. For $d=3$ there is a faster parameterized algorithm with running time $2.562^k n^{\Oh(1)}$ \longversion{due to  Kutzkov and  Scheder  }\cite{abs-1007-1166}. Thus \textsc{Min-Ones $3$-SAT} can be solved in time $\Oh(1.6097^n)$. 
Since  \textsc{$d$-Hitting Set} is a special case of \textsc{Min-Ones $d$-SAT}, the same bounds hold for this problem as well, and the same approach works for weighted variants of these problems.
However, due to faster known   parameterized algorithms for  \textsc{$d$-Hitting Set}, our theorem implies faster exact algorithms for \textsc{$d$-Hitting Set} with running time $(2-\frac{1}{(d-0.9255)})^n$.

Another interesting example is \longversion{the }\textsc{Feedback Vertex Set}\longversion{ problem}. Here the task is to decide, for a graph $G$ and an integer $k$, whether $G$ can be made acyclic by removing $k$ vertices.
While this problem is trivially solvable in time  $2^n n^{\Oh(1)}$ for $n$-vertex graphs,
breaking the $2^n$-barrier  for the problem was an open problem \longversion{in the area }for some time.
The first algorithm breaking the barrier is due
to Razgon~\cite{Razgon06}.
\shortversion{After a series of improvements \cite{FominGPR08-On,XiaoN13,FominTV15}, an  $\Oh(1.7347^n)$ time algorithm was obtained in \cite{FominTV15}.}%
\longversion{The running time $\Oh(1.8899^n)$ of the algorithm from~\cite{Razgon06} was improved in~\cite{FominGPR08-On} to
$\Oh(1.7548^n)$. Then Xiao and Nagamochi~\cite{XiaoN13} gave an algorithm with running time   $\Oh(1.7356^n)$. 
Finally an algorithm of running time $\Oh(1.7347^n)$ was obtained in \cite{FominTV15}.}
For the parameterized version of the problem there was also a chain of improvements \cite{CaoCL15,ChenFLLV08,DehneFLRS07,GuoGHNW06} resulting in   a $3^k  n^{\Oh(1)}$ time randomized algorithm   \cite{cut-and-count}
and a $3.591^k n^{\Oh(1)}$ time deterministic algorithm  \cite{KociumakaP13}. This, coupled with our main theorem, immediately gives us randomized and deterministic algorithms of running times $\Oh(1.6667^n)$ and $\Oh(1.7216^n)$, respectively.

More generally, let $\Pi$ be a hereditary family of graphs. That is, if $G \in \Pi$ then so are all its induced subgraphs. Examples of hereditary families include the edgeless graphs, forests, bipartite graphs, chordal graphs, interval graphs, split graphs, and perfect graphs.  Of course this list is not exhaustive. For every hereditary graph family $\Pi$ there is a natural vertex deletion problem\shortversion{:}\longversion{, that we define here.}

\defproblem{\fdel}{An undirected (or directed) graph $G$ and an integer $k$.}{Is there a set $S\subseteq V(G)$ with $|S|\le k$ such that $G-S\in \Pi$?}
\smallskip

\noindent 
We can cast \fdel\ as a \phisub\ problem as follows. The instance $I$ describes the graph $G$, so $U_I = V(G)$ and ${\cal F}_I$ is the family of subsets $S$ of $V(G)$ of size at most $k$ such that $G - S \in \Pi$. Notice that a polynomial time algorithm to determine whether a graph $G$ is in $\Pi$ yields a polynomial time algorithm to determine whether a set $S$ is in ${\cal F}_I$, implying that $\Phi$ is polynomial time computable.
Moreover, a $c^k \bitsize^{\Oh(1)}$ time algorithm for \fdel\ trivially gives the same running time for its extension variant, since vertices in $X$ can simply be deleted.
Also, if $\Pi$ is characterized by a set of forbidden induced subgraphs which all have at most $d$ vertices, such as cographs ($d=4$) and split graphs ($d=5$), we can reduce the \fdel\ problem to $d$-Hitting Set where the number of elements is the number of vertices of the input graph.

\newcommand{\tcite}[1]{\hfill\cite{#1}}

\begin{table}[t]
	\centering
	\setlength{\tabcolsep}{5pt}
	{\footnotesize
		\begin{tabular}{l l l l}
			\toprule
			Problem Name       &  Parameterized     &  New bound                                & Previous Bound       \\
			\midrule
			{\sc Feedback Vertex Set}   & $3^k$ (r) \tcite{cut-and-count} &      $1.6667^n$   (r)                  &    \\
			{\sc Feedback Vertex Set} &  $3.592^k$            \tcite{KociumakaP13}          &  $1.7217^n$    & $1.7347^n$ \tcite{FominTV15}  \\
			{\sc Subset Feedback Vertex Set} &            $4^k$         \tcite{Wahlstrom14}        &   $1.7500^n$   &  $1.8638^n$ \tcite{FominHKPV14}  \\
			{\sc Feedback Vertex Set in Tournaments} &     $1.6181^k$        \tcite{KumarL16}                 &   $1.3820^n$   &  $1.4656^n$  \tcite{KumarL16}  \\
			{\sc  Group Feedback Vertex Set} &          $4^k$           \tcite{Wahlstrom14}                 &   $1.7500^n$   & NPR    \\
			\textsc{Node Unique Label Cover} &          $|\Sigma|^{2k}$           \tcite{Wahlstrom14}                 &   $(2-\frac{1}{|\Sigma|^2})^n$   & NPR    \\
			{\abpartization} ($r,\ell \leq 2$) &   $3.3146^k$   \tcite{BasteFKS15,KolayP15}  & $1.6984^n$ &  NPR  \\ 
			{\sc Interval Vertex Deletion}&                   $8^k$       \tcite{Cao8kinterval}   &  $1.8750^n$    & $(2-\varepsilon)^n$ for $\varepsilon <10^{-20}$  \tcite{BliznetsFPV13} \\
			{\sc Proper Interval Vertex Deletion} &     $6^k$           \tcite{HofV13,Cao15}           &   $1.8334^n$   &  $(2-\varepsilon)^n$ for $\varepsilon <10^{-20}$  \tcite{BliznetsFPV13} \\
			{\sc Block Graph Vertex Deletion}&                   $4^k$   \tcite{AgrawalLKS16}      &    $1.7500^n$   & $(2-\varepsilon)^n$ for $\varepsilon <10^{-20}$  \tcite{BliznetsFPV13}  \\
			{\sc   Cluster Vertex Deletion}&                   $1.9102^k$      \tcite{BoralCKP14}    &   $1.4765^n$   &  $1.6181^n$  \tcite{FominGKLS10}  \\
			{\sc   Thread Graph Vertex Deletion} &                   $8^k$    \tcite{Kante0KP15}  &    $1.8750^n$    & NPR    \\
			{\sc   Multicut on Trees} &                   $1.5538^k$  \tcite{KanjLLTXXYZZZ14}   &    $1.3565^n$    & NPR    \\
			{\sc    $3$-Hitting Set} &                   $2.0755^k$    \tcite{MagnusPhD07}    &  $1.5182^n$    &   $1.6278^n$    \tcite{MagnusPhD07}  \\
			{\sc   $4$-Hitting Set}&                   $3.0755^k$      \tcite{FominGKLS10}      &  $1.6750^n$    &   $1.8704^n$ \tcite{FominGKLS10}     \\
			{\sc   $d$-Hitting Set} ($d\geq 3$)&                   $(d-0.9245)^k$     \tcite{FominGKLS10}        &  $(2-\frac{1}{(d-0.9245)})^n$    & \tcite{CochefertCGK16,FominGKLS10}   \\
			{\sc    Min-Ones $3$-SAT} &                   $2.562^k$    \tcite{abs-1007-1166}      &   $1.6097^n$   &  NPR   \\
			{\sc    Min-Ones $d$-SAT} ($d\geq 4$) &                   $d^k$          &  $(2-\frac{1}{d})^n$    &   NPR    \\
			{\sc    Weighted $d$-SAT} ($d\geq 3$) &                   $d^k$          &  $(2-\frac{1}{d})^n$    &   NPR    \\
			{\sc Weighted Feedback Vertex Set} & $3.6181^k$   \tcite{AgrawalLKS16}                 &   $1.7237^n$     &  $1.8638^n$  \tcite{FominGPR08-On} \\
			\textsc{Weighted 3-Hitting Set} & $2.168^k$ \tcite{ShachnaiZ15} & $1.5388^n$ & $1.6755^n$ \tcite{CochefertCGK16}\\
			\textsc{Weighted $d$-Hitting Set} ($d\ge 4$) & $(d-0.832)^k$ \tcite{FominGKLS10,ShachnaiZ15} & $(2-\frac{1}{d-0.932})^n$ & \tcite{CochefertCGK16} \\
			\bottomrule
		\end{tabular}
		
	}
	\caption{\label{fig:vertexresults}Summary of known and new results for different 
		optimization problems.
		NPR means  that we are not aware of any previous algorithms faster than brute-force. All bounds suppress factors polynomial in the input size $N$.\longversion{ The algorithms in the first row are randomized (r).}}
\end{table}


In Table~\ref{fig:vertexresults} we list more applications of Theorem~\ref{thm:main2}.
We also provide the running times of the fastest known parameterized and exact algorithms. The problem definitions are given in \longversion{the appendix}\shortversion{Appendix \ref{sec:problems}}.
For most \longversion{of these }problems, the results are obtained by simply using the fastest known parameterized algorithm in combination with Theorem \ref{thm:main2}.
\longversion{Some of the results (for \textsc{Weighted} $d$-\textsc{Hitting Set})}\shortversion{The results for \textsc{Weighted} $d$-\textsc{Hitting Set}} follow from a variant that relies on algorithms for a permissive version of \phiext; see \longversion{Subsection \ref{subsec:permissive}}\shortversion{Appendix \ref{subsec:permissive}}.

We also extend the technique \longversion{developed for decision problems }to enumeration problems and \longversion{to prove }combinatorial upper bounds. For example, a minimal satisfying assignment of a $d$-CNF formula is a satisfying assignment $a$ such that no other satisfying assignment sets every variable to $0$ that $a$ sets to $0$.
It is interesting to investigate the number of minimal satisfying assignments of $d$-CNF formulas, algorithms to enumerate these assignments, and upper bounds and enumeration algorithms for other combinatorial objects.


Formally, let $\Phi$ be an implicit set system and $c \geq 1$ be a real-valued constant. We say that $\Phi$ is {\em $c$-uniform} if, for every instance $I$, set $X \subseteq U_I$, and integer $k \leq n - |X|$, the cardinality of the collection
\mymath{{\mathcal F}_{I,X}^k =\left\{S \subseteq U_I \setminus X~:~|S| = k \text{ and } S \cup X \in {\cal F}_I \right\}}
is at most $c^k n^{\Oh(1)}$. The next theorem will provide new \longversion{combinatorial }upper bounds for collections generated by $c$-uniform implicit set systems.

\begin{restatable}{theorem}{thmComb}
	\label{thm:main3}
	Let $c > 1$ and $\Phi$ be an implicit set system. If $\Phi$ is $c$-uniform, then $|{\cal F}_I| \leq \left(2-\frac{1}{c}\right)^n n^{\Oh(1)}$ for every instance $I$. 
\end{restatable}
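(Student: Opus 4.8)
The plan is to establish the bound by a double counting argument in which $c$-uniformity plays the role that the extension algorithm plays in the proof of Theorem~\ref{thm:main1}. Write $N_j := |\{F \in \mathcal{F}_I : |F| = j\}|$, so $|\mathcal{F}_I| = \sum_{j=0}^{n} N_j$. Fix $j$ and an integer $t$ with $0 \le t \le j$, and count the pairs $(X,S)$ of disjoint subsets of $U_I$ with $|X| = j - t$, $|S| = t$, and $X \cup S \in \mathcal{F}_I$. Counting by first choosing $F := X \cup S \in \mathcal{F}_I$ with $|F| = j$ and then $S \in \binom{F}{t}$ gives exactly $\binom{j}{t} N_j$ such pairs. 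Counting by first choosing $X$ (there are $\binom{n}{j-t}$ choices) and then $S$ (at most $c^{t} n^{\Oh(1)}$ choices for each fixed $X$, by $c$-uniformity with budget $k = t$) gives at most $\binom{n}{j-t}\, c^{t}\, n^{\Oh(1)}$ such pairs. Equating the two counts, $N_j \le \frac{\binom{n}{j-t}}{\binom{j}{t}}\, c^{t}\, n^{\Oh(1)}$ for every valid $t$; hence $|\mathcal{F}_I| \le n^{\Oh(1)} \cdot \max_{0 \le j \le n}\, \min_{0 \le t \le j}\, \frac{\binom{n}{j-t}}{\binom{j}{t}}\, c^{t}$.

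It remains to bound this max--min by $(2 - \tfrac1c)^n$. For a given $j$ I choose $t$ as follows. If $j \le n/c$, take $t = j$: the quantity becomes $c^{j} \le c^{n/c} = (c^{1/c})^{n} \le (2-\tfrac1c)^{n}$, the last step being the elementary inequality $(2-\tfrac1c)^{c} \ge c$ for $c \ge 1$ (which follows since the left side equals $1$ at $c = 1$ and is nondecreasing in $c$, using $\ln(1+y) \ge \tfrac{y}{1+y}$). If $j > n/c$, take $t = \lfloor \tfrac{n-j}{c-1} \rfloor$, which lies in $\{0, \dots, j\}$ precisely because $j > n/c$; rounding changes the quantity by only an $n^{\Oh(1)}$ factor, so I treat $t$ as $\tfrac{n-j}{c-1}$. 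Writing $j = \beta n$ and using $\binom{n}{a} \le 2^{nH(a/n)}$ in the numerator and $\binom{n}{a} \ge 2^{nH(a/n)}/(n+1)$ in the denominator (with $H$ the binary entropy), the quantity is at most $2^{n g(\beta)}\cdot n^{\Oh(1)}$, where $g(\beta) = H\!\bigl(\tfrac{c\beta-1}{c-1}\bigr) - \beta\, H\!\bigl(\tfrac{1-\beta}{(c-1)\beta}\bigr) + \tfrac{1-\beta}{c-1}\log_2 c$. So it suffices to show $g(\beta) \le \log_2(2 - \tfrac1c)$ on $[\tfrac1c, 1]$.

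To analyze $g$, abbreviate $u = \tfrac{c\beta-1}{c-1}$ and $v = \tfrac{1-\beta}{(c-1)\beta}$; then $1 - u = \tfrac{c(1-\beta)}{c-1}$, $1 - v = \tfrac{c\beta-1}{(c-1)\beta}$, $\tfrac1\beta = 1 + (c-1)v$, and $\tfrac{(1-u)(1-v)}{uv} = c$. Using $H'(x) = \log_2 \tfrac{1-x}{x}$, a short computation shows that these identities collapse the derivative to $g'(\beta) = \log_2\!\bigl(\tfrac{c(1-\beta)}{(c-1)\beta}\bigr)$, which is positive for $\beta < \beta^{\ast}$ and negative for $\beta > \beta^{\ast}$, where $\beta^{\ast} = \tfrac{c}{2c-1}$ is its unique zero. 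Hence $g$ is maximized at $\beta^{\ast}$, where $u = \tfrac{c-1}{2c-1}$, $v = \tfrac1c$, and $\tfrac{1-\beta^{\ast}}{c-1} = \tfrac1{2c-1}$; expanding the entropies, the $\log_2(c-1)$ contributions cancel and one gets $g(\beta^{\ast}) = \log_2(2c-1) - \log_2 c = \log_2(2 - \tfrac1c)$. Combined with the bound $g(\beta) = \beta \log_2 c < \tfrac1c \log_2 c \le \log_2(2 - \tfrac1c)$ for $\beta \in [0, \tfrac1c)$, this proves $g \le \log_2(2 - \tfrac1c)$ everywhere, and the theorem follows.

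The main obstacle is the computation in the last paragraph: verifying that $g'$ collapses to $\log_2\!\bigl(\tfrac{c(1-\beta)}{(c-1)\beta}\bigr)$ and that $g(\beta^{\ast})$ equals $\log_2(2 - \tfrac1c)$ on the nose. Both become routine once one keeps the substitutions $u = u(\beta)$, $v = v(\beta)$ and the identity $\tfrac{(1-u)(1-v)}{uv} = c$ in hand, but this is where all the arithmetic lives; everything before it is an immediate consequence of $c$-uniformity and the two-way count. Unsurprisingly, this is essentially the same optimization that pins down the base of the exponent in Theorem~\ref{thm:main1}, so it could alternatively be imported from there.
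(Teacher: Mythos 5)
Your proposal is correct, and it is essentially the paper's argument in a different dress. The paper proves the same inequality $N_k \le \frac{\binom{n}{t}}{\binom{k}{t}} c^{k-t} n^{\Oh(1)}$ via a two-stage random process (sample $X$ of size $t$, then pick $S$ uniformly from ${\mathcal F}_{I,X}^{k-t}$), noting the events $E_Z$ are disjoint and each has probability at least $\binom{k}{t}/\binom{n}{t}\cdot c^{-(k-t)}n^{-\Oh(1)}$; your double count of pairs $(X,S)$ yields the identical bound after the substitution $t \leftrightarrow k-t$, and your entropy computation of the max-min is precisely a re-derivation of the paper's Lemma~\ref{lem:technical} (which, as you anticipate, the paper simply cites at this point).
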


\noindent
We say that an implicit set system $\Phi$ is {\em efficiently $c$-uniform} if there exists an algorithm that given $I$, $X$ and $k$ enumerates all elements of  ${\mathcal F}_{I,X}^k$ in time $c^k \bitsize^{\Oh(1)}$.
In this case, we can enumerate $\mathcal{F}_I$ in the same time, up to a subexponential factor in $n$.

\begin{restatable}{theorem}{thmEnum}
	\label{thm:main4}
	Let $c > 1$ and $\Phi$ be an implicit set system. If $\Phi$ is efficiently $c$-uniform, then there is an algorithm that given as input $I$ enumerates ${\cal F}_I$ in time $\left(2-\frac{1}{c}\right)^{n+o(n)} \bitsize^{\Oh(1)}$. 
\end{restatable}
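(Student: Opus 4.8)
The plan is to execute the monotone local search scheme behind Theorem~\ref{thm:main2} in a ``collecting'' mode: instead of stopping at the first seed that can be extended to a solution, we run the efficient $c$-uniform enumeration algorithm on \emph{every} candidate seed and accumulate all the sets it outputs. Fix an instance $I$ with $|U_I|=n$. For each target size $s\in\{0,1,\dots,n\}$ I would enumerate all $S\in\mathcal{F}_I$ with $|S|=s$; taking the union over all $s$ and discarding duplicates then yields $\mathcal{F}_I$.

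For a fixed $s$ and a parameter $t=t(s)\le s$ to be chosen, the central object is a \sepfamw{} $\mathcal{C}_{s,t}$ consisting of $t$-subsets of $U_I$ with the covering property that every $s$-subset of $U_I$ contains at least one member of $\mathcal{C}_{s,t}$. The pseudo-random object built for the derandomization in Theorem~\ref{thm:main2} supplies precisely this: a family of size $|\mathcal{C}_{s,t}|\le\binom{n}{t}\binom{s}{t}^{-1}2^{o(n)}$ together with an algorithm that lists it within essentially that same time bound. For each $X\in\mathcal{C}_{s,t}$ I would then call the efficient $c$-uniform enumeration algorithm on $(I,X,s-t)$, which lists $\mathcal{F}_{I,X}^{s-t}$ in time $c^{s-t}\bitsize^{\Oh(1)}$, and output $S'\cup X$ for every $S'\in\mathcal{F}_{I,X}^{s-t}$.

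Correctness is immediate. Every output set equals $S'\cup X$ with $S'\cup X\in\mathcal{F}_I$, hence lies in $\mathcal{F}_I$. Conversely, if $S\in\mathcal{F}_I$ and $|S|=s$, the covering property gives some $X\in\mathcal{C}_{s,t}$ with $X\subseteq S$; since $|S\setminus X|=s-t$ and $(S\setminus X)\cup X=S\in\mathcal{F}_I$, we get $S\setminus X\in\mathcal{F}_{I,X}^{s-t}$, so $S$ is produced while $X$ is processed. A given set may be produced several times (once per seed of $\mathcal{C}_{s,t}$ it contains), but the number of productions never exceeds the running time, so maintaining a dictionary of the sets already output removes the duplicates with only a polynomial-factor time overhead (absorbed into $\bitsize^{\Oh(1)}$) and space proportional to $|\mathcal{F}_I|$; the theorem constrains only time, so this is harmless.

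It remains to bound the running time and to choose $t(s)$. The work for a fixed $s$ is at most $|\mathcal{C}_{s,t}|\cdot c^{s-t}\bitsize^{\Oh(1)}\le\binom{n}{t}\binom{s}{t}^{-1}c^{s-t}2^{o(n)}\bitsize^{\Oh(1)}$, and summing over the $n+1$ values of $s$ adds only a linear factor; so it suffices to take $t(s)$ minimizing $\binom{n}{t}\binom{s}{t}^{-1}c^{s-t}$. The key fact --- the same one that drives Theorems~\ref{thm:main1}--\ref{thm:main3}, provable by a binary-entropy estimate --- is that $\max_{0\le s\le n}\min_{0\le t\le s}\binom{n}{t}\binom{s}{t}^{-1}c^{s-t}\le\left(2-\frac{1}{c}\right)^n n^{\Oh(1)}$, which gives overall running time $\left(2-\frac{1}{c}\right)^{n+o(n)}\bitsize^{\Oh(1)}$. (This is consistent with Theorem~\ref{thm:main3}, which already bounds $|\mathcal{F}_I|$, and hence the output size, by $\left(2-\frac{1}{c}\right)^n n^{\Oh(1)}$, so the algorithm is output-efficient up to a subexponential factor.) I expect the only genuine obstacle to be bookkeeping against the derandomization machinery: one must verify that iterating over the families $\mathcal{C}_{s,t(s)}$ costs no more than $2^{o(n)}$ times their combined size, and that the subexponential slack from the \sepfamwPlural, from deduplication, and from the summation over $s$ all collapse into a single $2^{o(n)}$ factor --- all already handled by the construction underlying Theorem~\ref{thm:main2}, so the argument is essentially a reduction.
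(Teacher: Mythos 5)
Your proposal matches the paper's proof essentially exactly: both replace the random sampling in the proof of Theorem~\ref{thm:main3} by iterating over an \sepfam{n}{k}{t} from Theorem~\ref{thm:sepfamconstr}, use the efficiently $c$-uniform property to enumerate ${\cal F}_{I,X}^{k-t}$ for each seed $X$, deduplicate via a dictionary (the paper uses a trie), and invoke the binomial-ratio bound of Lemma~\ref{lem:technical} with the same choice of $t$. This is the paper's argument, including the observation that deduplication costs extra space but not time.
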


\noindent
For minimal satisfying assignments of $d$-CNF formulas, we observe that the afore-mentioned branching algorithm for the extension version of \textsc{Min-Ones} $d$-\textsc{Sat}, which explores the Hamming ball of radius $k$ around the all-$0$ assignment of the reduced instance, encounters all minimal satisfying assignments extending $X$ by at most $k$ variables. Thus, minimal satisfying assignments for $d$-CNF formulas are $d$-uniform. It follows immediately that minimal $d$-hitting sets are $d$-uniform and they are also efficiently $d$-uniform.
\longversion{
	
}%
By a classical theorem of Moon from 1971 \cite{Moon:1971xz},
the number of maximal transitive subtournaments in an $n$-vertex tournament does not exceed 
 $1.7170^n$. In   \cite{GaspersM13}, Gaspers and Mnich improved this bound to $1.6740^n$. Our approach yields immediately  a better bound of $\Oh(1.6667^n)$
 since every directed 3-cycle needs to be hit.
 Similarly, in chordal graphs, a set is a feedback vertex set (FVS) if it hits every 3-cycle.
 For maximal $r$-colorable induced subgraphs of perfect graphs it suffices to hit every clique of size $r+1$. Some consequences of our results for enumeration algorithms and combinatorial bounds are given in Table~\ref{fig:enumresults}.

\begin{table}[t]
	\centering
	\setlength{\tabcolsep}{4pt}
	{\footnotesize
		\begin{tabular}{l l l l}
			\toprule
			Problem Name       &  $c$-uniform     &  New bound                                & Previous Bound       \\
			\midrule
			{Minimal FVSs in Tournaments} &     $3$                 &   $1.6667^n$   &  $1.6740^n$  \tcite{GaspersM13}  \\
			Minimal 3-{Hitting Sets} &          $3$     &   $1.6667^n$   & $1.6755^n$ \tcite{CochefertCGK16}    \\
			Minimal 4-{Hitting Sets} &          $4$     &   $1.7500^n$   & $1.8863^n$ \tcite{CochefertCGK16}    \\
			Minimal 5-{Hitting Sets} &          $5$     &   $1.8000^n$   & $1.9538^n$ \tcite{CochefertCGK16}    \\
			Minimal $d$-{Hitting Sets} &          $d$     &   $(2-\frac{1}{d})^n$   & $(2-\epsilon_d)^n$ with $\epsilon_d<1/d$ \tcite{CochefertCGK16}    \\
			Minimal $d$-{\sc Sat} ($d\ge 2$) &          $d$     &   $(2-\frac{1}{d})^n$   & NPR    \\
			Minimal FVSs in chordal graphs &          $3$     &   $1.6667^n$   & $1.6708^n$ \tcite{GolovachHP12}    \\
			Minimal Subset FVSs in chordal graphs &          $3$     &   $1.6667^n$   & NPR    \\
			Maximal $r$-colorable induced subgraphs of perfect graphs &          $r+1$     &   $(2-\frac{1}{r+1})^n$   & NPR    \\
			\bottomrule
		\end{tabular}
		
	}
	\caption{\label{fig:enumresults}Summary of known and new results for different combinatorial bounds. NPR means  that we are not aware of any previous results better than $2^n$. All bounds suppress factors polynomial in the universe size $n$.}
\end{table}

\medskip\noindent\textbf{Local Search versus Monotone Local Search.} 
One of the successful approaches for obtaining exact exponential-time algorithms for $d$-SAT  is based on sampling and local search. In his breakthrough paper  Sch{\"o}ning \cite{Schoening99-A} introduced the following simple and elegant approach: sample a random assignment and then do a local search in a Hamming ball of small radius around this assignment. With  the right choice of the parameter for the local search algorithm (the Hamming distance, in this case) it is possible to prove that with a reasonable amount of samples this algorithm decides the satisfiability of a given formula with good probability. The running time of Sch{\"o}ning's algorithm on formulas with $n$ variables is $\Oh((2-2/d)^n)$ and  it was shown by Moser and Scheder    \cite{MoserS11} how to derandomize it in almost the same running time, see also~\cite{Dantsin02deterministic}.  

While this method has been very successful for satisfiability, it is not clear how to apply this approach to other NP-complete problems, in particular to optimization problems, like finding a satisfying assignment of Hamming weight at most $k$ or finding a hitting set of size at most $k$. The reason why  Sch{\"o}ning's approach cannot be directly applied to optimization problems is that it is very difficult to get efficient local search algorithms for these problems. 

Consider for example \textsc{Min-Ones} $d$-\textsc{Sat}. 
If we select some assignment $a$ as a center of Hamming ball $B_r$ of radius $r$, there is a dramatic difference between searching for any satisfying assignment in $B_r$, and a satisfying assignment of Hamming weight at most $k$ in $B_r$. In the first case the local search problem can be solved in time $d^r \cdot n^{\Oh(1)}$. In the second case we do not know any better alternative to a brute-force search. Indeed, an algorithm with running time on the form $f(r) \cdot n^{\Oh(1)}$ for any function $f$ would imply that \FPT{} = \W[1]{}. This issue is not specific to \textsc{Min-Ones} $d$-\textsc{Sat}: it is known that the problem of  searching  a Hamming ball $B_r$ of radius $r$ is \W[1]-hard parameterized by $r$ for most natural optimization problems~\cite{FellowsFLRSV12}. 
 
Despite this obstacle, our approach is based on sampling an initial solution, and then performing a local search from that solution. The way we get around the hardness of local search is to make the local search problem easier, at the cost of reducing the success probability of the sampling step. Specifically, we only consider {\em monotone} local search, where we are not allowed to remove any elements from the solution, and only allowed to add at most $k$ new elements. Instead of searching a Hamming {\em ball} around the initial solution, we look for a solution in a Hamming {\em cone}. Monotone local search is equivalent to the extension problem, and it turns out that the extension problem can very often be reduced to the problem of finding a solution of size at most $k$. This allows us to use for our monotone local search the powerful toolbox developed for parameterized algorithms.  



\medskip\noindent\textbf{Our approach.}
Our algorithm is based on random sampling. Suppose we are looking for a solution $S$ of size $k$ in a universe $U$ of size $n$, and we have already found some set $X$ of size $t$ which we know is a subset of $S$. At this point, one option we have is to run the extension algorithm -- this would take time $c^{k-t}\cdot n^{\Oh(1)}$. Another option is to pick a {\em random} vertex $x$ from $U \setminus X$, add $x$ to $X$ and then proceed. We succeed if $x$ is in $S \setminus X$, so the probability of success is $(k-t)/(n-t)$. If we succeed in picking $x$ from $S \setminus X$ then $k-t$ drops by $1$, so running the extension algorithm on $X \cup \{x\}$ is a factor $c$ faster than running it on $X$. Therefore, as long as $(k-t)/(n-t) \geq 1/c$ it is better to keep sampling vertices and adding them to $X$. When $(k-t)/(n-t) < 1/c$ it is better to run the algorithm for the extension problem. This is the entire algorithm!

While the description of the algorithm is simple, the analysis is a bit more involved. At a first glance it is not at all obvious that a $100^k \cdot n^{\Oh(1)}$ time algorithm for the extension problem gives any advantage over trying all subsets of size $k$ in ${n \choose k}$ time. To see why our approach outperforms ${n \choose k}$, it is helpful to think of the brute-force algorithm as a randomized algorithm that picks a random subset of size $k$ by picking one vertex at a time and inserting it into the solution. The success probability of such an algorithm is 
\mymath{\frac{k}{n} \cdot \frac{k-1}{n-1} \cdot \frac{k-2}{n-2} \cdot \ldots \cdot \frac{2}{n-(k-2)} \cdot \frac{1}{n-(k-1)} = \frac{1}{{n \choose k}}.}
Notice that in the beginning of the random process the success probability of each step is high, but that it gets progressively worse, and that in the very end it is close to $1/n$. At some point we have picked $t$ vertices and $(k-t)/(n-t)$ drops below $1/c$. Here we run the extension algorithm, spending time $c^{k-t}$, instead of continuing with brute-force, which would take time
\mymath{{n-t \choose  k-t} = \frac{n-t}{k-t} \cdot \frac{n-t-1}{k-t-1} \cdot \ldots \cdot \frac{n-k+2}{2} \cdot \frac{n-k+1}{1}}
which is a product of $k-t$ larger and larger terms, with even the \longversion{first and }smallest term being greater than $c$. Thus we can conclude that any $c^k$ algorithm will give some improvement over $2^n$.

Notice that if the algorithm is looking for a set of size $k$ in a universe of size $n$, the number $t$ of vertices to sample before the algorithm should switch from picking more random vertices to running the extension algorithm can directly be deduced from $n, k$, and $c$. The algorithm picks a random set $X$ of size $t$, and runs the extension algorithm on $X$. We succeed if $X$ is a subset of a solution, hence the success probability is $p = {k \choose t}/{n \choose t}$. In order to get constant success probability, we run the algorithm $1/p$ times, taking time $c^{k-t}\cdot n^{\Oh(1)}$ for each run.

In order to derandomize the algorithm we \longversion{show that it is possible to }construct in time $(1/p) \cdot 2^{o(n)}$ a family ${\cal F}$ of sets of size $t$, such that $|{\cal F}| \leq (1/p) \cdot 2^{o(n)}$, and every set of size $k$ has a subset of size $t$ in ${\cal F}$. Thus, it suffices to \longversion{construct ${\cal F}$ and }run the extension algorithm on each set $X$ in ${\cal F}$. The construction of the family ${\cal F}$ lends ideas from Naor et al.~\cite{NaorSS95}, however their methods are not directly applicable to our setting. 

The main technical contribution of this paper is a non-trivial generalization of local-search based satisfiability algorithms to a wide class of optimization problems. Instead of covering the search space by Hamming balls, we cover it by Hamming cones and use a parameterized algorithm to search for a solution in each of the cones.
 

\section{Combining Random Sampling with FPT Algorithms}\label{sec:randalgo}

In this section we prove our main results, Theorems \ref{thm:main1}--\ref{thm:main4}, that will give new algorithms to find a set in $\mathcal{F}_I$ and to enumerate the sets in $\mathcal{F}_I$.
For many potential applications, the objective is to find a minimum-size set with certain properties, for example that the removal of this set of vertices yields an acyclic graph. This can easily be done using the algorithms resulting from Theorems \ref{thm:main1} and \ref{thm:main2} with only a polynomial overhead by using binary search over $k$, the size of the targeted set $S$, and specifying that $\mathcal{F}_I$ contains only sets of size at most $k$.

\subsection{Picking Random Subsets of the Solution.}

This subsection is devoted to the proof of Theorem \ref{thm:main1}.
The theorem will follow from the following lemma, which gives a new randomized algorithm for \phiext.
%
\begin{lemma}
\label{lemma:subext}
%
%
If there is a constant $c > 1$ and an algorithm for \phiext\ with running time $c^k \bitsize^{\Oh(1)}$, then there is a randomized algorithm for \phiext\ with running time $(2-\frac{1}{c})^{n-|X|}\bitsize^{\Oh(1)}$.
\end{lemma}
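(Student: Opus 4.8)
The plan is to obtain Lemma~\ref{lemma:subext} by combining a sampling‑based self‑reduction of the extension problem (to the same problem with a smaller budget) with a purely combinatorial optimization that pins down how much to sample. Fix an instance $(I,X,k)$ of \phiext{} and write $m=n-|X|=|U_I\setminus X|$; we may assume $k\le m$. For a guessed solution size $s\in\{0,1,\dots,k\}$ and an integer $t$ with $0\le t\le s$, consider the subroutine $\mathrm{Try}(s,t)$ that repeats the following $\big\lceil \binom{m}{t}/\binom{s}{t}\big\rceil$ times: sample a uniformly random $t$-subset $R$ of $U_I\setminus X$ and run the assumed $c^{k}N^{\Oh(1)}$-time algorithm for \phiext{} on $(I,X\cup R,\,s-t)$, accepting if any call accepts. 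The algorithm runs $\mathrm{Try}(s,t(s))$ for every $s\in\{0,\dots,k\}$, where $t(s)$ is the choice determined below, and accepts iff some subroutine accepts.

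For correctness, suppose there is $S\subseteq U_I\setminus X$ with $S\cup X\in\mathcal{F}_I$ and $|S|\le k$, and let $s=|S|$. In the call $\mathrm{Try}(s,t(s))$ with $t=t(s)\le s$, a sampled $R$ satisfies $R\subseteq S$ with probability exactly $\binom{s}{t}/\binom{m}{t}$, and whenever $R\subseteq S$ the set $X\cup R$ extends to $X\cup S\in\mathcal{F}_I$ by adding the $s-t\ge 0$ elements of $S\setminus R$, so the \phiext{} call accepts. Since the subroutine repeats $\big\lceil\binom{m}{t}/\binom{s}{t}\big\rceil$ times, it accepts with probability at least $1-(1-p)^{1/p}\ge 1-e^{-1}$ where $p=\binom{s}{t}/\binom{m}{t}$; repeating each subroutine an extra $\Oh(\log m)$ times (absorbed into the polynomial factor) drives the overall error below $\tfrac12$. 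On \no{}-instances every call of the (correct) extension algorithm rejects, so the algorithm rejects. Taking $X=\emptyset$ and $k=n$ then yields Theorem~\ref{thm:main1}.

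For the running time, $\mathrm{Try}(s,t)$ runs in time $\big(\binom{m}{t}/\binom{s}{t}\big)\cdot c^{s-t}\cdot N^{\Oh(1)}$, so it suffices to choose, for each $s$, an integer $t=t(s)\in\{0,\dots,s\}$ with $\binom{m}{t}\binom{s}{t}^{-1}c^{\,s-t}\le \big(2-\tfrac1c\big)^{m}\,m^{\Oh(1)}$; this is the crux. Write $s=\alpha m$, $t=\beta m$, and use the estimate $\binom{am}{bm}=2^{a H(b/a)m}$ up to a factor $m^{\Oh(1)}$, where $H$ is the binary entropy. Then the logarithm of the left‑hand side, divided by $m$, equals $\varphi(\beta)=H(\beta)-\alpha H(\beta/\alpha)+(\alpha-\beta)\log_2 c$ up to $\Oh(\log m/m)$. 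Minimising over $\beta\in[0,\alpha]$: if $\alpha\le 1/c$ the minimum is at $\beta=0$, giving $\varphi(0)=\alpha\log_2 c\le\tfrac1c\log_2 c$, which is at most $\log_2(2-\tfrac1c)$ because $c^{1/c}\le 2-\tfrac1c$ for all $c\ge 1$. If $\alpha\ge 1/c$, solving $\varphi'(\beta)=0$ gives the interior minimizer $\beta^\star=\tfrac{c\alpha-1}{c-1}\in[0,\alpha]$, and a short computation shows that $\varphi(\beta^\star)$, viewed as a function of $\alpha$, is maximised at $\alpha=\tfrac{c}{2c-1}$ (where $\beta^\star=\tfrac{c-1}{2c-1}$), with maximum value exactly $\log_2\!\big(2-\tfrac1c\big)$. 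Hence for every $\alpha\in[0,1]$ there is a real $\beta\in[0,\alpha]$ with $\varphi(\beta)\le\log_2(2-\tfrac1c)$; setting $t(s)=\lfloor\beta^\star m\rfloor$ (or $0$ when $\alpha<1/c$) and noting that such rounding perturbs the bound by at most a polynomial in $m$ gives the desired inequality. Summing $\big(2-\tfrac1c\big)^m m^{\Oh(1)}$ over the at most $m+1$ choices of $s$, and including the amplification overhead, the total running time is $\big(2-\tfrac1c\big)^{m}N^{\Oh(1)}=\big(2-\tfrac1c\big)^{n-|X|}N^{\Oh(1)}$.

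The main obstacle is the combinatorial optimization in the last paragraph: identifying the optimal sampling fraction $\beta^\star$, and in particular verifying that the minimax value of $\varphi$ is precisely $\log_2(2-1/c)$ together with the auxiliary inequality $c^{1/c}\le 2-1/c$. Everything else — the sampling self‑reduction, probability amplification, and the summation over candidate sizes $s$ — is routine.
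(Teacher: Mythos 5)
Your proof is correct and takes essentially the same route as the paper: the sampling self‑reduction (pick a random $t$‑subset of $U_I\setminus X$, call the extension algorithm with budget reduced by $t$, repeat $\binom{m}{t}/\binom{s}{t}$ times, iterate over all candidate solution sizes $s\le k$) matches the paper's algorithm $\mathcal{A}$ exactly, and your choice $\beta^\star=\frac{c\alpha-1}{c-1}$ and the worst‑case $\alpha=\frac{c}{2c-1}$ coincide with the paper's technical lemma. The only difference is cosmetic — you phrase the optimization via binary entropy $H(\cdot)$ whereas the paper manipulates the binomial‑coefficient bounds directly — but the underlying computation, including the boundary case $c^{1/c}\le 2-\frac{1}{c}$, is identical.
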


\begin{proof}
Let $\cal B$ be an algorithm for \phiext{} with running time $c^k\bitsize^{\Oh(1)}$. 
We now give another algorithm, $\cal A$, for the same problem. $\cal A$ is a randomized algorithm and consists of the following two steps for an input instance $(I,X,k')$ with $k'\le k$.
\begin{enumerate}
\shortversion{\setlength{\itemsep}{-2pt}}
\item Choose an integer $t \leq k'$ depending on $c$, $n$, $k'$ and $|X|$, and then select a random subset $Y$ of $U_I\setminus X$ of size $t$. The choice of $t$ will be discussed towards the end of the proof.
\item 
Run Algorithm $\cal B$ on the instance $(I, X \cup Y, k'-t)$ and return the 
answer. 
\end{enumerate}
This completes the description of \longversion{Algorithm }$\cal A$. Its running time is clearly upper bounded by $c^{k'-t} \bitsize^{\Oh(1)}$. 

If $\cal A$ returns \yes\ for $(I,X,k')$, this is because $\cal B$ returned \yes\ for $(I,X \cup Y, k' - t)$. In this case there exists a set $S \subseteq U_I \setminus (X \cup Y)$ of size at most $k' - t\le k-t$
such that $S\cup X \cup Y \in {\cal F}_I$. Thus, $Y \cup S$ witnesses that $(I,X,k)$ 
is indeed a \yesinstance.

Next we lower bound the probability that $\cal A$ returns \yes\ in case there exists a set $S \subseteq U_I \setminus X$ of size exactly $k'$ such that $X \cup S \in {\cal F}_I$.
The algorithm $\cal A$ picks a set $Y$ of size $t$ at random from $U_I \setminus X$. There are ${n-|X| \choose t}$ possible choices for $Y$. If $\cal A$ picks one of the ${k' \choose t}$ subsets of $S$ as $Y$ then $\cal A$ returns \yes. Thus, \longversion{given that there exists a set $S \subseteq U_I \setminus X$ of size $k'$ such that $X \cup S \in {\cal F}_I$, }we have that
\mymath{
\Pr\left[{\cal A} \mbox{ returns \yes} \right]  \geq  \Pr[Y \subseteq S]  = {{k' \choose t}}/{{n - |X| \choose t}}.
}


\noindent
Let $p(k') = {{k' \choose t}}/{{n - |X| \choose t}}$. For each $k'\in\{0,\dots,k\}$, our main algorithm runs ${\cal A}$ independently $1/p(k')$ times with parameter $k'$. The algorithm returns \yes\ \longversion{if any of the runs of}\shortversion{as soon as} ${\cal A}$ return \yes{}.
If $(I,X,k)$ is a \yes-instance,
then the main algorithm returns \yes\ with probability at least $\min_{k' \leq k} \{1-(1-p(k'))^{1/p(k')}\}\geq 1-\frac{1}{e} >\frac{1}{2}$. 
Next we upper bound the running time of the main algorithm, which is 
\mymath{
 \sum_{k' \leq k} \frac{1}{p(k')} \cdot c^{k'-t} \bitsize^{\Oh(1)} \leq  \max_{k' \leq k} \frac{{n - |X| \choose t}}{{k' \choose t}} \cdot c^{k'-t} \bitsize^{\Oh(1)} \leq \max_{k \leq n-|X|} \frac{{n - |X| \choose t}}{{k \choose t}} \cdot c^{k-t} \bitsize^{\Oh(1)}.
}

We are now ready to discuss the choice of $t$ in the algorithm ${\cal A}$. The algorithm ${\cal A}$ chooses the value for $t$ that gives the minimum value of $ \frac{{n - |X| \choose t}}{{k' \choose t}} \cdot c^{k'-t}$. Thus\longversion{, for fixed $n$ and $|X|$} the running time of the algorithm is\longversion{ upper bounded by}
\begin{align}\label{eqn:runtimeext}
 \max_{0 \leq k \leq n-|X|}\left\{  \min_{0 \leq t \leq k} \left\{  \frac{{n-|X| \choose t}}{{k \choose t}} c^{k-t} \bitsize^{\Oh(1)} \right\}\right\}.
\end{align}
We upper bound \longversion{the expression in \eqref{eqn:runtimeext}}\shortversion{this expression} by  
$\left(2-\frac{1}{c}\right)^{n-|X|}\bitsize^{\Oh(1)}$ in Lemma~\ref{lem:technical}\shortversion{ (see Appendix \ref{sec:omittedProofs})}. 
The running time of the algorithm is thus upper bounded by $\left(2-\frac{1}{c}\right)^{n-|X|}\bitsize^{\Oh(1)}$\longversion{, completing the proof}. 
\end{proof}

\begin{remark}
The proof of Lemma~\ref{lemma:subext} goes through just as well when $\cal B$ is a randomized algorithm. If $\cal B$ is deterministic or has one-sided error (possibly saying \no{} whereas it should say \yes{}), then the algorithm of Lemma~\ref{lemma:subext} also has one sided error. If $\cal B$ has two sided error, then the algorithm of  Lemma~\ref{lemma:subext} has two sided error as well.
\end{remark}

\longversion{
Now we give the technical lemma that was used to upper bound the running time of the algorithm described in Lemma~\ref{lemma:subext}. }
\newcommand{\lemmaTechnical}{
\begin{restatable}{lemma}{lemTechnical}
\label{lem:technical}
Let $c>1$ be a fixed constant, and \longversion{let }$n$ and $k \leq n$ be non-negative integers. Then,
\begin{eqnarray*}
\max_{0 \leq k \leq n}\left\{  \min_{0 \leq t \leq k} \left\{  \frac{{n \choose t}}{{k \choose t}} c^{k-t} \right\}\right\} \leq \left(2-\frac{1}{c}\right)^{n} n^{\Oh(1)}
\end{eqnarray*}
\end{restatable}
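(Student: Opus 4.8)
The plan is to reduce the double extremal problem to a clean optimization in a single continuous variable, and then recognize the resulting maximum as a binomial-type sum. First I would fix $k$ and choose $t$ to be a suitable fraction of $n$, say $t = \lfloor \alpha n \rfloor$ for a parameter $\alpha \in [0, k/n]$ to be optimized; since we are proving an upper bound on the $\min$ over $t$, any single admissible choice suffices. Writing $k = \beta n$, Stirling's approximation gives $\binom{n}{t} = 2^{n H(\alpha) + O(\log n)}$ and $\binom{k}{t} = \binom{\beta n}{\alpha n} = 2^{\beta n H(\alpha/\beta) + O(\log n)}$, where $H$ is the binary entropy. Hence the quantity to be bounded is, up to a polynomial factor, $2^{n g(\alpha,\beta)}$ with
\begin{equation*}
 g(\alpha,\beta) = H(\alpha) - \beta H(\alpha/\beta) + (\beta - \alpha)\log_2 c.
\end{equation*}
The goal becomes: for every $\beta \in [0,1]$, exhibit a choice of $\alpha$ making $g(\alpha,\beta) \le \log_2(2 - 1/c)$.

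The natural guess is to pick $\alpha$ so that the ``density'' of the sampled part matches the threshold appearing in the algorithm's analysis, i.e. set $\alpha/\beta$ (the fraction of the $k$-set that we sample) equal to a value tied to $c$; concretely I expect the right choice is $\alpha/\beta = 1/c$ is \emph{not} quite it — rather one should choose $\alpha$ so that $\frac{\partial}{\partial t}$ of the expression vanishes, which after differentiating $g$ in $\alpha$ and setting to zero yields a relation of the form $\frac{(1-\alpha)\,\beta}{\alpha\,(\beta-\alpha)} = c$, i.e. $\alpha$ is the solution of a quadratic. Substituting this optimal $\alpha$ back into $g$ should collapse the expression: I would verify that at this critical point $2^{n g(\alpha,\beta)}$ equals the coefficient-style bound $\sum_{j} \binom{n}{j}(c-1)^{?}$ — more precisely, I expect the identity to be that $\min_t \binom{n}{t}\binom{k}{t}^{-1} c^{k-t}$ is, up to polynomial factors, exactly the $k$-th "reverse" term controlling $\sum_{i=0}^{n}\binom{n}{i} \max(1, c-1)^{\ldots}$, and that the maximum over $k$ of these terms is dominated by $(1 + (1 - 1/c))^n = (2 - 1/c)^n$ via the binomial theorem applied to $\bigl((1-1/c) + 1\bigr)^n = \sum_j \binom{n}{j}(1-1/c)^j$. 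The cleanest route is probably to avoid the differentiation entirely: observe that $\frac{\binom{n}{t}}{\binom{k}{t}} = \frac{\binom{n}{k}\binom{n-t}{k-t}}{\binom{n-t}{n-k}}$ is not simpler, so instead use the well-known identity $\binom{n}{t} / \binom{k}{t} = \binom{n-t}{n-k} / \binom{n}{n-k}$ — wait, the correct one is $\binom{n}{t}\big/\binom{k}{t} = \binom{n}{k}\big/\binom{n-t}{k-t}$, which lets us rewrite the target as $\binom{n}{k}\cdot \min_t \frac{c^{k-t}}{\binom{n-t}{k-t}}$, and then bound $\binom{n-t}{k-t} \ge$ a suitable central-type quantity.

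I would therefore organize the write-up as: (1) reduce to bounding $\binom{n}{k} \cdot \min_{0 \le t \le k} c^{k-t}/\binom{n-t}{k-t}$ via the binomial identity; (2) choose $t = \max(0, k - \lceil (n-k)/(c-1)\rceil)$ or the analogous value that equalizes consecutive terms of the sequence $c^{k-t}/\binom{n-t}{k-t}$ in $t$, so that the minimizing $t$ is essentially where $c \approx (n-t)/(k-t)$; (3) plug in and simplify, using $\binom{a}{b} \ge (a/b)^b$ and $\binom{a}{b} \ge (a/(a-b))^{a-b}$ as needed, to get the bound $(2-1/c)^n n^{O(1)}$; (4) finally take the max over $k$, checking the bound is increasing-then-decreasing in $k$ with the stated maximum. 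The main obstacle I anticipate is step (3): carrying out the substitution so that the messy product of binomials genuinely telescopes to $(2 - 1/c)^n$ rather than merely something close to it requires choosing $t$ exactly right and being careful with floors/ceilings and the $n^{O(1)}$ slack — in particular handling the boundary regimes where the optimal $t$ would want to be negative (so one takes $t = 0$ and the bound becomes $\binom{n}{k}c^k$, which must still be checked to be at most $(2-1/c)^n n^{O(1)}$ for those $k$, presumably the small-$k$ regime where $\binom{n}{k}$ is itself tiny) or where $t$ wants to exceed $k$. Getting all these cases to line up under the single clean bound is where the real work lies; the entropy/Stirling heuristic above is mainly a sanity check that the constant $2 - 1/c$ is the truth and not an artifact.
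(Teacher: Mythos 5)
Your high-level plan matches the paper's proof: rescale $t$ and $k$ to $\alpha = t/n$, $\beta = k/n$, apply the entropy (Stirling) bound on binomials, choose $\alpha$ at the ``density threshold'' where $\frac{k-t}{n-t} = \frac{1}{c}$ (equivalently $\alpha = \frac{1-c\beta}{1-c}$), substitute, and then maximize the resulting single-variable expression over $\beta$. That is exactly the paper's route, and your intuitive checkpoint ``$c \approx (n-t)/(k-t)$'' is correct.

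However, the proposal as written has a genuine gap plus some slips that would trip you up if you tried to carry it out. The central step --- substituting the optimal $\alpha$ into the entropy expression, showing it collapses algebraically to $f(\beta) = \bigl(\tfrac{c}{c-1}\bigr)^{\beta-1}(1-\beta)^{\beta-1}\beta^{-\beta}$, then differentiating to find the maximizer $\beta = \tfrac{c}{2c-1}$ and verifying $f\bigl(\tfrac{c}{2c-1}\bigr) = 2 - \tfrac{1}{c}$ --- is not done; you explicitly flag it as ``where the real work lies,'' but that computation \emph{is} the lemma, not a sanity check of it. Furthermore, your stated stationarity condition $\frac{(1-\alpha)\beta}{\alpha(\beta-\alpha)} = c$ is wrong: differentiating $g(\alpha,\beta) = H(\alpha) - \beta H(\alpha/\beta) + (\beta - \alpha)\log_2 c$ gives $g_\alpha' = \log_2\tfrac{1-\alpha}{\alpha} - \log_2\tfrac{\beta-\alpha}{\alpha} - \log_2 c$, whose vanishing yields $\tfrac{1-\alpha}{\beta-\alpha} = c$, a \emph{linear} relation (so no quadratic), consistent with your heuristic guess but not with your formula; if you used the quadratic root you would not get the clean collapse to $2 - \tfrac{1}{c}$. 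Finally, in the $t = 0$ boundary regime the expression $\binom{n}{0}/\binom{k}{0}\cdot c^k$ is just $c^k$, not $\binom{n}{k}c^k$; since $\alpha = 0$ forces $\beta \le 1/c$, one simply checks $c^{n/c} \le (2 - 1/c)^n$, i.e.\ $c^{1/c} \le 2 - 1/c$ for $c > 1$, which holds because both sides agree at $c = 1$ and the right side grows faster.
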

}
\longversion{\lemmaTechnical}
\newcommand{\proofLemTechnical}{%
\begin{proof}
Setting $\mu = \frac{k}{n}$ and $\alpha = \frac{t}{n}$, we have that
\[
\max_{0 \leq k \leq n}\left\{  \min_{0 \leq t \leq k} \left\{  \frac{{n \choose t}}{{k \choose t}} c^{k-t} \right\}\right\} = 
\max_{0\leq \mu \leq 1}\left\{  \min_{0\leq \alpha \leq \mu} \left\{  \frac{{n \choose \lceil \alpha n \rceil}}{ {\lceil \mu n \rceil \choose \lceil \alpha n \rceil}} c^{(\mu-\alpha)n}  \right\}\right\} \cdot \Oh(1)
\]
The right hand side of the equation above is upper bounded by picking a concrete value of $\alpha$ for every value of $\mu$, rather than minimizing over all $\alpha$. We set $\alpha=\max\left(0, \frac{1-c\mu}{1-c}\right)$. One can show that this choice of $\alpha$ minimizes the internal expression. In particular, this basically guarantees that $\frac{k-t}{n-t} = \frac{1}{c}$, and this is the natural threshold for when to stop sampling (see the discussion in the introduction).

First, consider the case where $\alpha=0$. The expression is upper bounded by $c^{n/c}$ since $1-c\mu\ge 0$.
To show that $\left(2-\frac{1}{c}\right)^n \ge c^{n/c}$, it suffices to show that $2-\frac{1}{c}-c^{1/c}\ge 0$.
But this is so because $2-\frac{1}{c}-c^{1/c}=0$ when $c=1$ and it is increasing with $c$ when $c>1$.

From now on, we assume $\alpha=\frac{1-c\mu}{1-c}>0$.
Thus,
\begin{align}\label{eqn:choseAlpha}
\max_{0\leq \mu \leq 1}\left\{  \min_{0\leq \alpha \leq \mu} \left\{  \frac{ {n \choose \lceil \alpha n \rceil}}{{\lceil \mu n \rceil \choose \lceil \alpha n \rceil}}c^{(\mu-\alpha)n}  \right\}\right\} 
\leq \max_{\begin{subxarray} 0\leq \mu \leq 1 \\ \alpha=\frac{1-c\mu}{1-c} \end{subxarray}} \left\{    \frac{ {n \choose \lceil \alpha n \rceil}}{{\lceil \mu n \rceil \choose \lceil \alpha n \rceil}}c^{(\mu-\alpha)n} \right\}.
\end{align}


We will also use the following well known bounds on binomial coefficients to simplify our expressions,  
\begin{align}\label{eqn:binBound}
\frac{1}{n^{\Oh(1)}}\left[\left(\frac{k}{n}\right)^{-\frac{k}{n}}\left(1 - \frac{k}{n}\right)^{\frac{k}{n} - 1}\right]^n \leq {n \choose k} \leq \left[\left(\frac{k}{n}\right)^{-\frac{k}{n}}\left(1 - \frac{k}{n}\right)^{\frac{k}{n} - 1}\right]^n.
\end{align}
Using the upper bound in Equation~\ref{eqn:choseAlpha} we obtain the following.
\begin{eqnarray}
\frac{ {n \choose \lceil \alpha n \rceil}}{{\lceil \mu n \rceil \choose \lceil \alpha n \rceil}}c^{(\mu-\alpha)n}
&\leq & \left(\frac{\alpha^{-\alpha} (1-\alpha)^{\alpha-1}}{(\frac{\alpha}{\mu})^{-\alpha}  (1-\frac{\alpha}{\mu})^{\alpha-\mu} } c^{(\mu-\alpha)}\right)^n  n^{\Oh(1)}\nonumber \\
& = &  \left(\frac{\alpha^{-\alpha} (1-\alpha)^{\alpha-1}} { \alpha^{-\alpha} \mu^\alpha (\mu-\alpha)^{\alpha -\mu}\mu^{(\mu-\alpha)}} c^{(\mu-\alpha)}\right)^n n^{\Oh(1)} \nonumber \\
& = & \left( (1-\alpha)^{\alpha-1}(\mu-\alpha)^{\mu -\alpha} \mu^{-\mu} c^{(\mu-\alpha)}\right)^n n^{\Oh(1)}
\label{eqn:runtimeA}
\end{eqnarray}
Substituting the value of $\alpha$ in Equation~\ref{eqn:runtimeA}, we get the following as the base of the exponent. 

\begin{eqnarray}
& &  \Big(\frac{c(1-\mu)}{c-1}\Big)^{ \frac{c(1-\mu)}{1-c}} \Big(\frac{1-\mu}{c-1}\Big)^{ \frac{1-\mu}{c-1}}
 \mu^{-\mu} c^{ \frac{1-\mu}{c-1}} \nonumber \\
 &=& \Big(\frac{c}{c-1}\Big)^{ \frac{c(1-\mu)}{1-c}+\frac{1-\mu}{c-1}} (1-\mu)^{ \frac{c(1-\mu)}{1-c}+\frac{1-\mu}{c-1}} ~\mu^{-\mu} \nonumber\\
 &= &  \Big(\frac{c}{c-1}\Big)^{\mu-1} (1-\mu)^{\mu-1} \mu^{-\mu} 
\label{eqn:runtimeB}
\end{eqnarray}
The last assertion in Equation~\ref{eqn:runtimeB} follows from the following simplification. 
\begin{eqnarray*}
 \frac{c(1-\mu)}{1-c}+\frac{1-\mu}{c-1}  =  \frac{c(1-\mu)-(1-\mu)}{1-c} =  \frac{(c-1)(1-\mu)}{1-c}=\mu-1. 
\label{eqn:runtimeC}
\end{eqnarray*}
To summarize the above discussion, we have upper bounded the expression in the statement of the lemma by
\begin{eqnarray*}
\max_{0 \leq k \leq n}\left\{  \min_{0 \leq t \leq k} \left\{  \frac{{n \choose t}}{{k \choose t}} c^{k-t} \right\}\right\} \leq \left
[\max_{0 \leq \mu \leq 1} f(\mu)\right]^n \cdot n^{O(1)},
\end{eqnarray*}
where $f(\mu)=\Big(\frac{c}{c-1}\Big)^{\mu-1} (1-\mu)^{\mu-1} \mu^{-\mu}$. 

We now turn to upper bounding the maximum of $f(\mu)$. Clearly, $f$ is continuous and differentiable on the interval $[0,1]$ and thus $f$ achieves its maximum at $\mu \in \{0,1\}$ or at a point where the derivative vanishes. Setting $\mu$ to $0$ and  to $1$, we get that $f(0)=\frac{c-1}{c}$ and $f(1)=1$, respectively. Next we differentiate $f$ with respect to $\mu$. The product rule for differentiation yields
\begin{eqnarray*}
f'(\mu)&=& f(\mu) \cdot  \left( \ln\left( \frac{c}{c-1}\right) + (\ln (1-\mu) +1) -1 -\ln \mu \right).
\end{eqnarray*}
%
%
%
Since $f(\mu) \neq 0$, we have that $f'(\mu) = 0$ if and only if 
$$ \ln\left( \frac{c}{c-1}\right) + \ln (1-\mu) -\ln \mu =0, $$
and the unique solution to this equation is $\mu = \frac{c}{2c-1}$.
%
%
Substituting this value of $\mu=\frac{c}{2c-1}$ in $f$ we get $f(\mu) = 2-\frac{1}{c}$.
%
%
%
%
Thus the maximum value $f$ can attain on $\mu \in [0,1]$ is the maximum of $1$, $1-\frac{1}{c}$ and $2-\frac{1}{c}$. Since, $c>1$ this implies that  $f(\mu) \leq 2-\frac{1}{c}$, completing the proof.
\end{proof}}
\longversion{\proofLemTechnical}


%
\noindent
By running the algorithm from Lemma~\ref{lemma:subext} with $X=\emptyset$ and for each value of $k\in \{0,\dots,n\}$, we obtain an algorithm for \phisub, and this proves Theorem \ref{thm:main1}.

\subsection{Derandomization}
In this subsection we prove Theorem \ref{thm:main2} by derandomizing the algorithm of Theorem~\ref{thm:main1}, at the cost of a subexponential factor in the running time. 
The key tool in our derandomization is a new pseudo-random object, which we call \sepfamwPlural{}, as well as an almost optimal (up to subexponential factors) construction of such objects. 
\begin{definition}
Let $U$ be a universe of size $n$ and let $0\leq q\leq p \leq n$. A family $\mathcal{C} \subseteq {U \choose q}$ is an \emph{\sepfam{n}{p}{q}}, if for every set $S \in {U \choose p}$, there exists a set $Y \in \cal C$ such that $Y \subseteq S$. 
\end{definition}

\noindent
Let $\kappa(n,p,q)= {{n \choose q}}/{{p \choose q}}$. In Section~\ref{sec:derandomization} (Theorem~\ref{thm:sepfamconstr}) we give a deterministic construction of an \sepfam{n}{p}{q}, $\cal C$, of size at most $\kappa(n,p,q) \cdot 2^{o(n)}$. The running time of the algorithm constructing  $\cal C$ is also upper bounded by $\kappa(n,p,q) \cdot 2^{o(n)}$.

%
The proof of Theorem~\ref{thm:main2} is now almost identical to the proof of Theorem~\ref{thm:main1}. However, in Lemma~\ref{lemma:subext} we replace the sampling step where the algorithm ${\cal A}$ picks a set $Y \subseteq U_I \setminus X$ of size $t$ at random, with a construction of an \sepfam{n-|X|}{k}{t} ${\cal C}$ using Theorem~\ref{thm:sepfamconstr}. Instead of  $\kappa(n-|X|,k,t) \cdot n^{O(1)}$ independent repetitions of the algorithm ${\cal A}$, the new algorithm loops over all $Y \in {\cal C}$. The correctness follows from the definition of \sepfamwPlural{}, while the running time analysis is identical to the analysis of Lemma~\ref{lemma:subext}.

\longversion{
\subsection{Extension to Permissive FPT Subroutines}
\label{subsec:permissive}
}
\newcommand{\secPermissive}{
For some of our applications, our results rely on algorithms for permissive variants of the \phiext\ problem.
Permissive problems were introduced in the context of local search algorithms \cite{MarxS11} and it has been shown that permissive variants can be fixed-parameter tractable even if the strict version is W[1]-hard and the optimization problem is NP-hard \cite{GaspersKOSS12}.

\defoptproblem{\pphiext}{An instance $I$, a set $X \subseteq U_I$, and an integer $k$.}
{If there is a subset $S \subseteq (U_I \setminus X)$ such that $S \cup X \in {\cal F}_I$ and $|S| \leq k$, then answer \yes;\newline
	else if $|{\cal F}_I|>0$, 
	then answer \yes\ or \no;\newline
	else answer \no.}

\noindent
We observe that any algorithm solving \phiext\ also solves \pphiext.
However, using an algorithm for \pphiext\ will only allow us to solve a decision variant of the \phisub\ problem, unless it also returns a certificate in case it answers \yes.

\defproblem{\dphisub}{An instance $I$}{Is $|{\cal F}_I|>0$?}

\noindent
The proof of Lemma \ref{lemma:subext} can easily be adapted to the \pphiext\ problem.

\begin{lemma}
	\label{lemma:subpext}
 If there exists a constant $c > 1$ and an algorithm for \pphiext\ with running time $c^k \bitsize^{\Oh(1)}$, then there exists a randomized algorithm for \pphiext\ with running time $(2-\frac{1}{c})^{n-|X|}\bitsize^{\Oh(1)}$.
\end{lemma}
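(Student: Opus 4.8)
The plan is to mimic the proof of Lemma~\ref{lemma:subext} verbatim, checking at each step that the weaker guarantee of \pphiext{} suffices. Let $\cal B$ be the assumed algorithm for \pphiext{} with running time $c^k\bitsize^{\Oh(1)}$. We build a randomized algorithm $\cal A$ for \pphiext{} exactly as before: on input $(I,X,k')$ with $k'\le k$, pick an integer $t\le k'$ depending only on $c$, $n$, $k'$, $|X|$ (the same choice of $t$ as in Lemma~\ref{lemma:subext}, namely the one minimizing $\frac{{n-|X|\choose t}}{{k'\choose t}}c^{k'-t}$), select a uniformly random $Y\subseteq U_I\setminus X$ of size $t$, run $\cal B$ on $(I, X\cup Y, k'-t)$, and return its answer. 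As before we run $\cal A$ independently $1/p(k')$ times for every $k'\in\{0,\dots,k\}$ with $p(k')={k'\choose t}/{n-|X|\choose t}$, returning \yes{} as soon as one run says \yes{}.

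The running time analysis is identical to that of Lemma~\ref{lemma:subext}, since it depends only on the running time of $\cal B$ and on the choice of $t$; it is bounded by the quantity in \eqref{eqn:runtimeext}, which Lemma~\ref{lem:technical} bounds by $(2-\frac{1}{c})^{n-|X|}\bitsize^{\Oh(1)}$. What must be re-examined is correctness, and here the key observation is that the three-way semantics of \pphiext{} are preserved under the reduction. First, soundness: if any run of $\cal A$ returns \yes{} on $(I,X,k')$, it is because $\cal B$ returned \yes{} on some $(I, X\cup Y, k'-t)$; by the specification of \pphiext{} this only guarantees $|{\cal F}_I|>0$ (not necessarily a small extension of $X$), but that is exactly what the output ``\yes'' of \pphiext{} on $(I,X,k)$ is allowed to mean, so returning \yes{} is legitimate. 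Second, the ``must say \yes'' case: if there genuinely is an $S\subseteq U_I\setminus X$ with $|S|=k'$ and $X\cup S\in{\cal F}_I$, then whenever the random $Y$ is one of the ${k'\choose t}$ subsets of $S$, the instance $(I, X\cup Y, k'-t)$ has the genuine solution $S\setminus Y$ of size $k'-t$, so $\cal B$ is obliged to answer \yes{}; hence each run succeeds with probability at least $p(k')$, and over $1/p(k')$ repetitions the success probability is at least $1-(1-p(k'))^{1/p(k')}\ge 1-\tfrac1e>\tfrac12$. Third, the ``must say \no'' case: if $|{\cal F}_I|=0$, then for every choice of $Y$ the instance $(I,X\cup Y, k'-t)$ also has an empty family, so $\cal B$ always returns \no{}, hence $\cal A$ always returns \no{}. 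This covers all three cases of the \pphiext{} specification.

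The only mildly subtle point—and the one I would flag as the main thing to get right rather than a genuine obstacle—is the bookkeeping in the soundness case: one must resist the temptation to claim that a \yes{} answer of $\cal A$ produces a witness of size at most $k$ extending $X$, because $\cal B$ provides no such witness. This is precisely why the conclusion is stated for \pphiext{} (and ultimately only yields \dphisub{} rather than \phisub{}, unless $\cal B$ is augmented to return a certificate), and the proof should say so explicitly. Everything else is a line-by-line transcription of the proof of Lemma~\ref{lemma:subext}, with the remark following that lemma applying here as well if $\cal B$ is itself randomized with one- or two-sided error.
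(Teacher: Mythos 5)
Your proof is correct and is precisely the adaptation the paper has in mind — the paper itself does not spell out a proof of this lemma, merely remarking that the proof of Lemma~\ref{lemma:subext} "can easily be adapted," and your case analysis (forced \yes{} when a genuine small extension of $X$ exists, forced \no{} when $\mathcal{F}_I$ is empty, and "either answer is legal" in between) is exactly the verification the authors elide. The one point you flag — that a \yes{} answer from the permissive oracle carries no witness, which is why this lemma only yields \dphisub{} rather than \phisub{} — is also exactly what the paper emphasizes in the surrounding discussion and in Theorem~\ref{thm:permissive}.
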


\noindent
Now, any algorithm for \pphiext\ also solves \dphisub. If the algorithm for \pphiext\ also returns a certificate whenever it answers \yes, this also leads to an algorithm for \phisub.
%
%
Again, these algorithms can be derandomized at the cost of a factor $2^{o(n)}$ in the running time.

\begin{theorem}
	\label{thm:permissive}
	If there is an algorithm for \pphiext{} with running time $c^k\bitsize^{\Oh(1)}$ then there is an algorithm for \dphisub\ with running time $(2-\frac{1}{c})^{n+o(n)} \bitsize^{\Oh(1)}$.
	Moreover, if the algorithm for \pphiext{} computes a certificate whenever it answers \yes, then there is an algorithm for \phisub\ with running time $(2-\frac{1}{c})^{n+o(n)} \bitsize^{\Oh(1)}$.
\end{theorem}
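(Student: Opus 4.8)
The plan is to run the proof of Theorem~\ref{thm:main2} essentially verbatim, replacing every call to an algorithm for \phiext\ by a call to an algorithm for \pphiext. First I would establish the permissive analogue of Lemma~\ref{lemma:subext}, i.e. Lemma~\ref{lemma:subpext}: given an algorithm $\cal B$ for \pphiext\ of running time $c^k\bitsize^{\Oh(1)}$, define a randomized algorithm $\cal A$ that, on input $(I,X,k')$, picks a uniformly random $Y\subseteq U_I\setminus X$ of size $t$ (with $t$ chosen exactly as in Lemma~\ref{lemma:subext}, depending only on $c,n,k',|X|$), runs $\cal B$ on $(I,X\cup Y,k'-t)$, and returns its answer. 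The running time is $c^{k'-t}\bitsize^{\Oh(1)}$; repeating $\cal A$ for every $k'\le k$, each $1/p(k')$ times with $p(k')={k'\choose t}/{n-|X|\choose t}$, and optimizing over $t$ via Lemma~\ref{lem:technical}, gives the bound $(2-\tfrac1c)^{n-|X|}\bitsize^{\Oh(1)}$, word for word as in the proof of Lemma~\ref{lemma:subext}.

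The only genuinely new point — and the one I expect to be the main (mild) obstacle — is checking correctness under the permissive semantics. The key observation is that passing from $(I,X,k')$ to $(I,X\cup Y,k'-t)$ changes only $X$ and the budget, leaving $U_I$ and ${\cal F}_I$ untouched, so the condition $|{\cal F}_I|>0$ is the same for both instances. Hence: (i) whenever $\cal B$ answers \yes\ it certifies $|{\cal F}_I|>0$, so $\cal A$'s \yes\ is sound for \pphiext; (ii) if $|{\cal F}_I|=0$ then every call to $\cal B$ is on an instance with empty family, so $\cal B$, and therefore $\cal A$, answers \no; (iii) if there is a genuine $S\subseteq U_I\setminus X$ with $S\cup X\in{\cal F}_I$ and $|S|\le k$, then running $\cal A$ with $k'=|S|$, the event $Y\subseteq S$ (which has probability $p(k')$ and forces $t\le|S|$) makes $S\setminus Y$ a valid extension of $X\cup Y$ of size $k'-t$, so $\cal B$ answers \yes; boosting yields success probability $\ge 1-1/e$. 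Everything else in the proof of Lemma~\ref{lemma:subext} is probability- and running-time bookkeeping that carries over unchanged.

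Next I would derandomize precisely as in the proof of Theorem~\ref{thm:main2}: for each target size $k'$, instead of taking $\kappa(n-|X|,k',t)\cdot n^{\Oh(1)}$ random size-$t$ samples, invoke Theorem~\ref{thm:sepfamconstr} to construct an \sepfam{n-|X|}{k'}{t}\ ${\cal C}$ of size $\kappa(n-|X|,k',t)\cdot 2^{o(n)}$ in time $\kappa(n-|X|,k',t)\cdot 2^{o(n)}$, and loop over all $Y\in{\cal C}$, running $\cal B$ on $(I,X\cup Y,k'-t)$. If a genuine extension $S$ of size exactly $k'$ exists, the defining property of \sepfamwPlural\ provides some $Y\in{\cal C}$ with $Y\subseteq S$, so $S\setminus Y$ witnesses that $\cal B$ must answer \yes; soundness and the $|{\cal F}_I|=0$ case are exactly (i) and (ii) above. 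The running-time analysis is identical to that of Lemma~\ref{lemma:subext} up to the extra $2^{o(n)}$ factor, yielding $(2-\tfrac1c)^{n+o(n)}\bitsize^{\Oh(1)}$.

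Finally, I would run this derandomized procedure with $X=\emptyset$ and $k=n$, so that $k'$ ranges over $\{0,\dots,n\}$, and answer \yes\ iff some call to $\cal B$ answers \yes. By (i) a \yes\ implies $|{\cal F}_I|>0$, and conversely if $S\in{\cal F}_I$ then for $k'=|S|$ the family ${\cal C}$ contains some $Y\subseteq S$ and $\cal B$ answers \yes\ on $(I,Y,|S|-t)$; if $|{\cal F}_I|=0$ all calls answer \no. This solves \dphisub\ within the claimed time. For the moreover part, if $\cal B$ also returns a certificate $S'$ (with $|S'|\le k'-t$ and $S'\cup X\cup Y\in{\cal F}_I$) whenever it answers \yes, then the procedure outputs $S'\cup Y\in{\cal F}_I$, solving \phisub. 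All remaining running-time accounting is inherited directly from the proofs of Lemma~\ref{lemma:subext} and Theorem~\ref{thm:main2}.
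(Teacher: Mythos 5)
Your proposal is correct and follows essentially the same route as the paper: lift Lemma~\ref{lemma:subext} to the permissive setting (the paper states this as Lemma~\ref{lemma:subpext} without proof), then derandomize via Theorem~\ref{thm:sepfamconstr}. The paper only sketches this, so you have filled in exactly the steps the authors leave implicit; your correctness analysis (i)--(iii), in particular the observation that $|\mathcal{F}_I|>0$ is invariant under replacing $(X,k')$ by $(X\cup Y,k'-t)$, is the right elaboration.

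One small imprecision in the ``moreover'' part: you posit that the certificate returned by $\cal B$ is an extension set $S'$ with $S'\cup X\cup Y\in\mathcal{F}_I$. Under permissive semantics $\cal B$ is allowed to answer \yes{} on $(I,X\cup Y,k'-t)$ even when no such $S'$ exists (as long as $|\mathcal{F}_I|>0$), so if $\cal B$ is to produce a certificate \emph{every} time it says \yes{}, that certificate must simply be some $W\in\mathcal{F}_I$, not necessarily an extension of $X\cup Y$. With that reading the ``moreover'' becomes even simpler --- just output $W$ --- and your derivation still goes through as a special case. This is a clarification, not a flaw in the underlying argument.
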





}
\longversion{\secPermissive}

\subsection{Enumeration and Combinatorial Upper Bounds}

\longversion{In this subsection, we prove Theorems \ref{thm:main3} and \ref{thm:main4} on combinatorial upper bounds and enumeration algorithms.}
\shortversion{We now outline the proofs of Theorems \ref{thm:main3} and \ref{thm:main4} on combinatorial upper bounds and enumeration algorithms (for full proofs see Appendix \ref{sec:omittedProofs}).}
\shortversion{%
 For Theorem \ref{thm:main3}, consider the following random process:
 \begin{enumerate}
 	\setlength{\itemsep}{-2pt}
 	\item Choose an integer $t$ based on $c,$ $n$, and $k$, then randomly sample a subset $X$ of size $t$ from $U_I$. 
 	\item Uniformly at random pick a set $S$ from ${\mathcal F}_{I,X}^{k-t}$, and output $W = X \cup S$. In the special case where ${\mathcal F}_{I,X}^{k-t}$ is empty return the empty set.
 \end{enumerate}
 An analysis similar to the one in Lemma \ref{lemma:subext} shows that each set in the family $\mathcal{F}_I$ is selected with probability at least $(2 - \frac{1}{c})^{-n} \cdot n^{-\Oh(1)}$. This implies that there are at most $\left(2-\frac{1}{c}\right)^n n^{\Oh(1)}$ such sets.
}

\longversion{\thmComb*}
\newcommand{\proofThmComb}{
\begin{proof}
Let $I$ be an instance and $k \leq n$. We prove that the number of sets in ${\cal F}_I$ of size exactly $k$ is upper bounded by $\left(2-\frac{1}{c}\right)^n n^{\Oh(1)}$. Since $k$ is chosen arbitrarily the bound on $|{\cal F}_I|$ will follow. We describe below a random process that picks a set $W$ of size $k$ from ${\cal F}_I$ as follows.
\begin{enumerate}
\item Choose an integer $t$ based on $c$, $n$, and $k$, then randomly sample a subset $X$ of size  $t$ from $U_I$. 
\item Uniformly at random pick a set $S$ from ${\mathcal F}_{I,X}^{k-t}$, and output $W = X \cup S$. In the corner case where ${\mathcal F}_{I,X}^{k-t}$ is empty return the empty set.
\end{enumerate}
This completes the description of the process. 

For each set $Z \in {\cal F}_I$ of size exactly $k$, let $E_Z$ denote the event that the set $W$ output by the random process above is equal to $Z$. Now we lower bound the probability of the event $E_Z$.  We have the following lower bound.
\begin{eqnarray}
\Pr[E_Z]& = & \Pr[X \subseteq Z \wedge S = Z \setminus X] \nonumber \\
& = & \Pr[X \subseteq Z] \times \Pr[S =Z \setminus X~|~X \subseteq Z] \label{eqn:chosingfromasetC} \\
&=&
\frac{{k \choose t}}{{n \choose t}} \times \frac{1}{|{\mathcal F}_{I,X}^{k-t}|}
\nonumber
%
\end{eqnarray}
Since $\Phi$ is $c$-uniform we have that $|{\mathcal F}_{I,X}^{k-t}| \leq c^{k-t} n^{\Oh(1)}$, hence 
$$\Pr[E_Z] \geq \frac{{k \choose t}}{{n \choose t}}  c^{-(k-t)} n^{-\Oh(1)}.$$
We are now ready to discuss the choice of $t$ in the random process. The integer $t$ is chosen such that the above expression for $\Pr[E_Z]$ is maximized (or, in other words, it's reciprocal is minimized). By Lemma~\ref{lem:technical} we have that for every $k \leq n$ there exists a $t \leq k$ such that 
$$\frac{{k \choose t}}{{n \choose t}}  c^{-(k-t)} \geq (2 - \frac{1}{c})^{-n} \cdot n^{-\Oh(1)}.$$
Hence $\Pr[E_Z] \geq  (2 - \frac{1}{c})^{-n} \cdot n^{-\Oh(1)}$ for every $Z \in {\cal F}_I$ of size $k$. Since the events $E_Z$ are disjoint for all the different sets $Z \in {\cal F}_I$ we have that 
$$\sum_{\begin{subxarray} Z \in {\cal F}_I \\ |Z| = k \end{subxarray}} \Pr[E_{Z}] \leq 1.$$
This, together with the lower bound on $\Pr[E_Z]$ implies that the number of sets in ${\cal F}_I$ of size exactly $k$ is upper bounded by $\left(2-\frac{1}{c}\right)^n n^{\Oh(1)}$, completing the proof. 
%
%
\end{proof}%
}
\longversion{\proofThmComb}

\longversion{\noindent}%
If the implicit set system $\Phi$ is efficiently $c$-uniform then the proof of Theorem~\ref{thm:main3} can be made constructive by replacing the sampling step by a construction of an \sepfam{n}{k}{t} ${\cal C}$ using Theorem~\ref{thm:sepfamconstr}. For each $X \in {\cal C}$ the algorithm uses the fact that $\Phi$ is efficiently $c$-uniform to loop over all sets $S \in {\mathcal F}_{I,X}^{k-t}$ and output $X \cup S$ for each such $S$. Looping over  ${\cal C}$ instead of sampling $X$ incurs a $2^{o(n)}$ overhead in the running time of the algorithm. In order to avoid enumerating duplicates, we also store each set that we output in a trie and for each set that we output, we check first in linear time whether we have already output that set. \shortversion{This proves Theorem \ref{thm:main4}.}

\longversion{\thmEnum*}

\section{Efficient Construction of Set-Inclusion-Families}\label{sec:derandomization}

In this section we give the promised construction of \sepfamwPlural{}. We start by giving a construction of  \sepfamwPlural{} with good bounds on the size, but with a poor bound on the construction time. Recall that $\kappa(n,p,q)= {n \choose q} / {p \choose q}$.\shortversion{ The following lemma is proved in Appendix \ref{sec:omittedProofs}.}%
%
\begin{restatable}{lemma}{lemSlowBalancedUniversal}
\label{lem:slowBalancedUniversal}
There is an algorithm that given $n$, $p$ and $q$ outputs an \sepfam{n}{p}{q} $\cal C$ of size at most $\kappa(n,p,q) \cdot n^{\Oh(1)}$ in time $\Oh(3^n)$.
\end{restatable}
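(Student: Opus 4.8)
\textbf{Proof plan for Lemma~\ref{lem:slowBalancedUniversal}.}
The plan is to build $\cal C$ greedily. Think of each set $S \in \binom{U}{p}$ as a ``client'' that must be covered by some $Y \in \binom{U}{q}$ with $Y \subseteq S$, and each candidate $Y \in \binom{U}{q}$ as ``covering'' all the $p$-sets that contain it, namely $\binom{n-q}{p-q}$ of them. A standard greedy set-cover argument says: repeatedly pick a $q$-set $Y$ that covers the largest number of still-uncovered $p$-sets, add it to $\cal C$, and delete the newly covered $p$-sets; repeat until every $p$-set is covered. The key inequality is that at every stage some $q$-set covers at least a $1/\kappa(n,p,q) = \binom{p}{q}/\binom{n}{q}$ fraction of the remaining uncovered $p$-sets. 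Indeed, if $R$ is the current family of uncovered $p$-sets, then summing over all $Y \in \binom{U}{q}$ the number of $S \in R$ with $Y \subseteq S$ gives $\sum_{S \in R} \binom{p}{q} = |R|\binom{p}{q}$ (each $S \in R$ has exactly $\binom{p}{q}$ subsets of size $q$), so by averaging some $Y$ lies in at least $|R|\binom{p}{q}/\binom{n}{q}$ of them. The standard set-cover bound then gives $|\cal C| \le \kappa(n,p,q)\cdot \ln\left(\binom{n}{p}\right) \le \kappa(n,p,q)\cdot n^{\Oh(1)}$, since $\ln\binom{n}{p} = \Oh(n) = n^{\Oh(1)}$ after absorbing into the polynomial factor.

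The remaining point is the running time. The naive implementation maintains the family $R$ of uncovered $p$-sets explicitly; $|R| \le \binom{n}{p} \le 2^n$, and in each of the $\le \kappa(n,p,q)\cdot n^{\Oh(1)}$ iterations we scan all $\binom{n}{q} \le 2^n$ candidate $q$-sets, and for each we count how many sets of $R$ contain it. To make this run in $\Oh(3^n)$ total I would organize the counting by iterating, for each candidate $Y \in \binom{U}{q}$, over all $p$-supersets $S \supseteq Y$ (there are $\binom{n-q}{p-q}$ of them) and checking membership of $S$ in $R$ in polynomial time via a trie or hash table; the total work per iteration is $\sum_{Y} \binom{n-q}{p-q} = \binom{n}{q}\binom{n-q}{p-q} \cdot n^{\Oh(1)} = \binom{n}{p}\binom{p}{q}\cdot n^{\Oh(1)}$ by the subset-counting identity, which is at most $3^n \cdot n^{\Oh(1)}$. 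Multiplying by the number of iterations, $\kappa(n,p,q) \le \binom{n}{q} \le 2^n$, would blow the bound past $3^n$, so I need to be more careful: instead I would recompute the coverage counts incrementally — after selecting $Y$ and removing the $\binom{n-q}{p-q}$ sets it newly covers, only the counts of $q$-sets that are subsets of some removed $p$-set change, and there are at most $\binom{n-q}{p-q}\binom{p}{q} \le \binom{n}{p}\binom{p}{q} \le 3^n$ such (incidence, $q$-set) pairs to touch. Then the total running time is $\binom{n}{p}\binom{p}{q}\cdot n^{\Oh(1)} \le 3^n \cdot n^{\Oh(1)}$ for all iterations combined, which absorbs into $\Oh(3^n)$ after noting $n^{\Oh(1)}\cdot \binom{n}{p}\binom{p}{q}$ is dominated by $3^n$ for, say, the constant in the exponent being strictly less; a cleaner way is simply to observe $\binom{n}{p}\binom{p}{q} \le \sum_{p,q}\binom{n}{p}\binom{p}{q} = 3^n$ and that the polynomial overhead can be hidden by a negligible increase in the base, or that $\Oh(3^n)$ literally allows a polynomial factor.

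The main obstacle I anticipate is precisely this bookkeeping for the running time: the greedy bound on $|\cal C|$ is completely routine, but a careless implementation costs $\kappa(n,p,q)\cdot 3^n$ rather than $3^n$. The fix is to maintain, across all iterations, a single pass over the bipartite incidence structure between $q$-sets and $p$-sets (which has $\binom{n}{p}\binom{p}{q} \le 3^n$ edges), updating coverage counters only along edges incident to sets that get removed, and using a priority queue keyed by coverage count to extract the best $Y$ in each round in polynomial time per update. Since every incidence edge is processed $\Oh(1)$ times over the whole execution, the total time is $\Oh(3^n)$ up to polynomial factors, which is what the lemma claims. I would also remark that when $\kappa(n,p,q)$ is large one can equivalently bound things via $\binom{n}{q}\binom{n-q}{p-q} = \binom{n}{p}\binom{p}{q} \le 3^n$, which is the single clean identity underpinning both the $|\cal C|$ bound and the time bound.
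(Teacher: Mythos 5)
Your plan is correct and takes essentially the same route as the paper's proof: both reduce the task to a set-cover instance (universe $=\binom{U}{p}$, each candidate $q$-set $B$ covering all $p$-supersets of $B$) and rely on greedy. The one structural difference is in how the $\kappa(n,p,q)\cdot n^{\Oh(1)}$ size bound is derived: the paper first shows by a probabilistic argument that an $(n,p,q)$-set-inclusion-family of size $\kappa(n,p,q)\cdot(p+1)\log n$ exists, then invokes the $\Oh(\log|\mathcal{U}|)$-approximation guarantee of the set-cover algorithm as a black box; you instead analyze the greedy iterations directly, observing by double-counting that some $q$-set always covers at least a $1/\kappa(n,p,q)$ fraction of the remaining uncovered $p$-sets, so $\kappa\ln\binom{n}{p}$ steps suffice. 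These are two presentations of the same underlying fact, and your version is arguably the cleaner of the two since it dispenses with the separate existence step. Your running-time accounting is also the same as the paper's: the incidence structure has $\binom{n}{q}\binom{n-q}{p-q}=\binom{n}{p}\binom{p}{q}\le 3^n$ edges, and a careful greedy implementation touches each a constant number of times, matching the paper's stated bound $\Oh\bigl(|\mathcal{U}|+\sum_{S\in\mathscr{S}}|S|\bigr)=\Oh(3^n)$. The concern you raise about absorbing the final polynomial overhead into $\Oh(3^n)$ applies equally to the paper's own statement and is harmless in context, since the lemma is only ever invoked in Theorem~\ref{thm:sepfamconstr} on sub-universes of size $\Oh(n/\log n)$, where the extra polynomial factor is subsumed into $2^{o(n)}$.
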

\newcommand{\proofLemSlowBalancedUniversal}{
\begin{proof}
We start by giving a randomized algorithm that with positive probability constructs an \sepfam{n}{p}{q}
 ${\cal C}$ with the claimed size. We will then discuss how to deterministically compute such a $\cal C$ within the required time bound. Set 
 $t =  \kappa(n,p,q)  \cdot (p+1)\log n$ 
 and construct the family ${\cal C} = \{C_1, \ldots, C_t\}$ by selecting each set $C_i$ independently and uniformly at random from ${U \choose q}$. 
 
By construction, the size of ${\cal C}$ is within the required bounds. We now argue that with positive probability ${\cal C}$ is indeed an \sepfam{n}{p}{q}. For a fixed set $A \in {U \choose p}$, and integer $i \leq t$, we consider the probability that $C_i \subseteq A$. This probability is
$1/{\kappa(n,p,q)}$.
Since each $C_i$ is chosen independently from the other sets in ${\cal C}$, the probability that {\em no} $C_i$ satisfies $C_i \subseteq A$ is
\begin{align*} \left(1 -  \frac{1}{\kappa(n,p,q)}\right)^t \leq e^{-(p+1)\log n} \leq \frac{1}{n^{p+1}}.\end{align*}
There are ${n \choose p}$ choices for $A \in {U \choose p}$, therefore the union bound yields that the probability that there exists an $A \in {U \choose p}$ such that no set $C_i \in {\cal C}$ satisfies $C_i \subseteq A$ is upper bounded by $\frac{1}{n^{p+1}} \cdot n^{p} = \frac{1}{n}$. 

To construct ${\cal C}$ within the stated running time proceed as follows. We construct an instance of {\sc Set Cover}, and then, using a known approximation algorithm for {\sc Set Cover}, we construct the desired family. An instance of {\sc Set Cover} consists of a universe $\cal U$ and a family $\mathscr{S}$ of subsets of $\cal U$. The objective is to find a minimum sized sub-collection $\mathscr{S}'\subseteq \mathscr{S}$ such that the union of elements of the sets in $\mathscr{S}'$ is $\cal U$. It is known that {\sc Set Cover} 
admits a polynomial time approximation algorithm with factor $\Oh(\log |{\cal U}|)$. For our problem, the 
elements of the universe $\cal U$ are $u_A$ for every $A\in {U \choose p}$. For every set $B\in {U \choose q}$, let $F_B$ consist of all the elements $u_A\in \cal U$ such that $B\subseteq A$. The set family $\mathscr S$ contains $F_B$ for each choice of $B\in {U \choose q}$. Given a sub-collection 
$\mathscr{S}'\subseteq \mathscr{S}$ we construct the family ${\cal C}(\mathscr{S}')$ by taking the sets $B\in {U \choose q}$ such that $F_B \in \mathscr{S}'$.  Clearly, any ${\cal C}(\mathscr{S}')$ corresponding to a sub-collection $\mathscr{S}'\subseteq \mathscr{S}$ covering $\cal U$ is a \sepfam{n}{p}{q}, and vice versa. 

Let {\sf OPT} denote the size of a minimum sized sub-collection $\mathscr{S}'\subseteq \mathscr{S}$ covering $\cal U$. 
We run the known $\Oh(\log |{\cal U}|)$-factor approximation algorithm on our instance and obtain a sub-collection $\mathscr{S}'\subseteq \mathscr{S}$ covering $\cal U$. Let ${\cal C}={\cal C}(\mathscr{S}')$. By discussions above we know that $\cal C$ is an \sepfam{n}{p}{q}. Clearly, the size of $\cal C$ is upper bounded by 
\[ |{\cal C}| \leq {\sf OPT} \cdot  \Oh(\log |{\cal U}|) \leq t \cdot  \Oh(\log |{\cal U}|) \leq \Oh(t  (\log n^p))\leq \kappa(n,p,q) \cdot n^{\Oh(1)}. \]
It is well known that one can implement the approximation algorithm for {\sc Set Cover} to run in time $\Oh(|{\cal U}|+\sum_{S\in \mathscr{S}} |S|)=\Oh({n \choose p}+{n \choose q} {n-q \choose p-q})=\Oh(3^n)$. 
%
This concludes the proof.
\end{proof}}
\longversion{\proofLemSlowBalancedUniversal}

\noindent
Next we will reduce the problem of finding an \sepfam{n}{p}{q} to the same problem, but with a much smaller value of $n$. To that end we will use a well-known construction of {\em pair-wise independent} families of functions. Let $U$ be a universe of size $n$ and $b$ be a positive integer. Let $\cal X$ be a collection of functions from $U$ to $[b]$. That is, each function $f$ in  $\cal X$ takes as input an element of $U$ and returns an integer from $1$ to $b$. The collection $\cal X$ is said to be {\em pair-wise independent} if, for every $i, j \in [b]$ and every $u, v \in U$ such that $u \neq v$ we have that
\mymath{\Pr_{f \in \cal X}[f(u) = i \wedge f(v) = j] = \frac{1}{b^2}.}
Observe that this implies that any pairwise independent family of functions from $U$ to $[b]$ with $|U| \geq 2$ also satisfies that for every $i \in [b]$ and $u \in U$ we have $\Pr_{f \in \cal X}[f(u) = i] = \frac{1}{b}$.
We will make use of the following known construction of pair-wise independent families.
\begin{proposition}[\cite{AlonBI86}]\label{prop:pairInd}
There is a polynomial time algorithm that given a universe $U$ and integer $b$ constructs a pair-wise independent family $\cal X$ of functions from $U$ to $[b]$. The size of $\cal X$ is $\Oh(n^2)$.
\end{proposition}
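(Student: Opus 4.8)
The plan is to invoke the classical algebraic construction of strongly $2$-universal (pairwise independent) hash families built from affine maps over a finite field. First I would fix a prime $q$ with $\max(n,b)\le q<2\max(n,b)$; such a prime exists by Bertrand's postulate and can be found in time polynomial in $n$ by testing the integers of this interval for primality. Identify $U$ with a subset of $\mathbb F_q$ via an injection $\varphi$ (possible since $q\ge n$), and fix a surjection $\pi\colon\mathbb F_q\to[b]$ whose fibres $\pi^{-1}(1),\dots,\pi^{-1}(b)$ differ in size by at most one (possible since $q\ge b$). The family would then be
\[
  \mathcal X=\bigl\{\,h_{a,c}:a,c\in\mathbb F_q\,\bigr\},\qquad h_{a,c}(u)=\pi\bigl(a\cdot\varphi(u)+c\bigr),
\]
so that $|\mathcal X|=q^2=\Oh(\max(n,b)^2)$, which is $\Oh(n^2)$ in the regime $b\le n$ relevant to the derandomization. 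Each $h_{a,c}$ is specified by the pair $(a,c)\in\mathbb F_q^2$ and evaluated with a constant number of arithmetic operations modulo $q$, so the whole family is produced in polynomial time.

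The core of the argument is the independence property, which I would first establish for the ``unrounded'' maps $g_{a,c}(u)=a\cdot\varphi(u)+c$ taking values in $\mathbb F_q$. Fix distinct $u\ne v$ in $U$ and set $x=\varphi(u)$, $y=\varphi(v)$, so $x\ne y$. For arbitrary targets $\alpha,\beta\in\mathbb F_q$, the two equations $ax+c=\alpha$ and $ay+c=\beta$, viewed as a linear system in the unknowns $a,c$, have a $2\times2$ coefficient matrix of determinant $x-y\ne0$, hence exactly one solution $(a,c)\in\mathbb F_q^2$. Therefore $\Pr_{a,c}[\,g_{a,c}(u)=\alpha\wedge g_{a,c}(v)=\beta\,]=1/q^2$; that is, $(g_{a,c}(u),g_{a,c}(v))$ is uniform on $\mathbb F_q\times\mathbb F_q$, and summing over $\beta$ gives the marginal $\Pr[g_{a,c}(u)=\alpha]=1/q$. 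Applying $\pi$ coordinatewise transfers uniformity to $[b]\times[b]$ whenever $\pi$ is perfectly balanced.

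The only delicate point is matching the range exactly to $[b]$ with probabilities exactly $1/b^2$: since $q$ is prime it is divisible by no $b\in\{2,\dots,q-1\}$, so the fold $\pi$ is only nearly balanced and introduces a bias of $\Oh(1/q)=\Oh(1/n)$ — already harmless for the derandomization, and the form in which such families are customarily used. To recover the exact definition when $b$ is a prime power, I would instead set $t=\lceil\log_b n\rceil$ (so $b^t\ge n$), identify $U$ with $n$ distinct vectors of $\mathbb F_b^t$, and use $h_{a,c}(u)=\langle a,u\rangle+c$ with $a\in\mathbb F_b^t$ and $c\in\mathbb F_b$; for $u\ne v$ the map $a\mapsto\langle a,u-v\rangle$ is a nonzero $\mathbb F_b$-linear functional, hence uniform on $\mathbb F_b$, and the same rank argument yields exact pairwise independence with range precisely $[b]$ and $|\mathcal X|=b^{t+1}=\Oh(b^2n)$; a general $b$ is handled by rounding up to the nearest prime power (a factor at most $2$) or directly by the construction of~\cite{AlonBI86}. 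I expect this range-matching bookkeeping, and not the independence proof — which is the one-line determinant computation above — to be the only real obstacle.
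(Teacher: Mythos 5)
The paper does not prove this proposition---it is invoked as a known result from Alon, Babai, and Itai \cite{AlonBI86}---so there is no in-paper proof to compare against. Your sketch is essentially the standard construction underlying that citation: affine maps $u \mapsto a\varphi(u)+c$ over a prime field $\mathbb F_q$ with $q=\Theta(n)$, whose pairwise independence on $\mathbb F_q$ follows from the one-line nonsingular-linear-system argument you give, followed by a fold $\pi:\mathbb F_q\to[b]$ to the required range. You have also correctly put your finger on the one genuine subtlety: when $b\nmid q$, the fold cannot be perfectly balanced, so $\Pr[f(u)=i\wedge f(v)=j]$ is only $1/b^2+\Oh(1/q)$, whereas the paper's Definition requires exactly $1/b^2$.

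Two remarks on how that subtlety interacts with the paper. First, the way the proposition is actually used (in Theorem~\ref{thm:sepfamconstr}) only needs (i) $\mathrm E[|U_f^i|]=n/b$ and $\mathrm E[|U_f^i\cap S|]=p/b$ up to $\Oh(1)$ additive error, and (ii) zero covariance of the indicators up to $\Oh(1/q)$ additive error; both are supplied by the approximately-balanced affine family, since the subsequent Chebyshev step tolerates an $\Oh(1)$ shift in the means and an $\Oh(n/q)=\Oh(1)$ perturbation in the variance. So the ``bias is harmless'' claim you make is correct in this application, not just plausible. Second, your exact fallback---vectors over $\mathbb F_b^t$ with $t=\lceil\log_b n\rceil$, giving a perfectly balanced family---has size $b^{t+1}\le b^2n$, which is $\Oh(n^2)$ only when $b=\Oh(\sqrt n)$. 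That covers the paper's instantiation $b=\lceil\log n\rceil$ comfortably, but it is worth noting that the clean $\Oh(n^2)$ bound for \emph{arbitrary} $b$ with \emph{exact} independence is not delivered by this fallback; it is the approximate affine construction (or the construction in \cite{AlonBI86} as stated) that gives $\Oh(n^2)$ across the board. Neither issue affects the paper, but it would be cleaner to either state the proposition with the $\Oh(1/q)$ bias explicit, or restrict to the regime $b\le n$ that the paper actually uses.

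Minor: when you sum over $\beta$ to get the marginal $\Pr[g_{a,c}(u)=\alpha]=1/q$, note that this already follows from the bijection $(a,c)\leftrightarrow(g_{a,c}(u),g_{a,c}(v))$; no separate argument is needed. Otherwise the proof is correct and complete modulo the rounding bookkeeping you flag.
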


\noindent
Using Proposition~\ref{prop:pairInd} we can give a much faster construction of an \sepfam{n}{p}{q} than the one in Lemma~\ref{lem:slowBalancedUniversal} at the cost of a subexponential overhead in the size of the family.

\begin{theorem}\label{thm:sepfamconstr}
There is an algorithm that given $n$, $p$ and $q$ outputs an \sepfam{n}{p}{q} $\cal C$ of size at most $\kappa(n,p,q) \cdot 2^{o(n)}$ in time $\kappa(n,p,q) \cdot 2^{o(n)}$.
\end{theorem}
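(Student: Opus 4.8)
\textbf{Proof proposal for Theorem~\ref{thm:sepfamconstr}.}

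The plan is to reduce the construction of an $(n,p,q)$-\sepfamw{} to many constructions over universes of size roughly $n/b$, solve each of those with the slow construction of Lemma~\ref{lem:slowBalancedUniversal}, and glue the results together; here $b:=\lceil\log n\rceil$. First dispose of a degenerate regime: if $q\le n/(\log n)^2$ then $\binom{n}{q}=2^{o(n)}$, so we may simply output $\mathcal C=\binom{U}{q}$, an $(n,p,q)$-\sepfamw{} of size $\binom nq\le\kappa(n,p,q)\cdot 2^{o(n)}$ constructible in time $2^{o(n)}$. So assume $q>n/(\log n)^2$, which (as $q\le p$) also gives $p>n/(\log n)^2$. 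Using Proposition~\ref{prop:pairInd}, build a pairwise independent family $\mathcal X$ of functions $U\to[b]$ with $|\mathcal X|=O(n^2)$; each $f\in\mathcal X$ splits $U$ into buckets $B^f_i=f^{-1}(i)$. Call a triple of vectors $(m_i),(a_i),(\ell_i)$ over $[b]$ \emph{balanced} if $\sum_i m_i=n$, $\sum_i a_i=p$, $\sum_i\ell_i=q$, $0\le\ell_i\le a_i\le m_i$ for all $i$, and $m_i,a_i,\ell_i$ differ from $n/b,p/b,q/b$ respectively by at most $\sqrt{bn}$. The algorithm loops over all $f\in\mathcal X$ such that $(|B^f_i|)_i$ is the first component of some balanced triple, over all balanced triples $(|B^f_i|)_i,(a_i),(\ell_i)$, builds via Lemma~\ref{lem:slowBalancedUniversal} a $(|B^f_i|,a_i,\ell_i)$-\sepfamw{} $\mathcal C_i$ in each bucket, and adds to $\mathcal C$ every transversal union $Y_1\cup\dots\cup Y_b$ with $Y_i\in\mathcal C_i$.

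For correctness, fix $S\in\binom{U}{p}$. The quantities $|B^f_i|=|\{u\in U:f(u)=i\}|$ and $|S\cap B^f_i|=|\{u\in S:f(u)=i\}|$ are sums of pairwise independent $\{0,1\}$-variables with means $n/b$ and $p/b$ and variances at most $n/b$ and $p/b$, so Chebyshev's inequality together with a union bound over the $2b$ variables shows that for a uniformly random $f\in\mathcal X$, with positive probability every $|B^f_i|$ is within $\sqrt{bn}$ of $n/b$ and every $|S\cap B^f_i|$ is within $\sqrt{bn}$ of $p/b$; hence such an $f$ occurs in $\mathcal X$. For this $f$ put $a_i:=|S\cap B^f_i|$; since $\sum_i a_i=p\ge q$ one can choose integers $\ell_i\le a_i$ with $\sum_i\ell_i=q$ and each $\ell_i$ within $\sqrt{bn}$ of $q/b$ (e.g.\ $\ell_i=\lfloor qa_i/p\rfloor$, distributing the residual $q-\sum_i\lfloor qa_i/p\rfloor<b$ units into buckets with $a_i>\ell_i$, of which there is provably enough total room). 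Then $(|B^f_i|)_i,(a_i),(\ell_i)$ is a balanced triple processed by the algorithm, and since $\mathcal C_i$ is an $(|B^f_i|,a_i,\ell_i)$-\sepfamw{} while $S\cap B^f_i\subseteq B^f_i$ has size exactly $a_i$, there is $Y_i\in\mathcal C_i$ with $Y_i\subseteq S\cap B^f_i$ and $|Y_i|=\ell_i$; thus $Y:=\bigcup_i Y_i\subseteq S$, $|Y|=q$, and $Y\in\mathcal C$.

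For the bounds, $|\mathcal X|=O(n^2)$, the number of balanced triples is at most $(2\sqrt{bn}+1)^{3b}=2^{O(\log^2 n)}=2^{o(n)}$, each $\mathcal C_i$ is built in time $O(3^{|B^f_i|})=O(3^{n/b+\sqrt{bn}})=2^{o(n)}$ (using $b=\lceil\log n\rceil$) and has size $\kappa(|B^f_i|,a_i,\ell_i)\cdot n^{O(1)}$; hence both the running time and $|\mathcal C|$ are bounded by $2^{o(n)}\cdot\max\prod_i\kappa(|B^f_i|,a_i,\ell_i)$, the maximum over balanced triples. It remains to prove $\prod_i\kappa(m_i,a_i,\ell_i)\le\kappa(n,p,q)\cdot 2^{o(n)}$ for every balanced triple, i.e.\ (writing $\log$ for $\log_2$) that $\sum_i\log\binom{m_i}{\ell_i}\le\log\binom nq+o(n)$ and $\sum_i\log\binom{a_i}{\ell_i}\ge\log\binom pq-o(n)$. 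Using the standard estimate $\log\binom mk=m\,H(k/m)+O(\log m)$ with $H$ the binary entropy, and noting that balancedness together with $p,q>n/(\log n)^2$ forces every ratio $\ell_i/m_i$ (resp.\ $\ell_i/a_i$) to differ from $q/n$ (resp.\ $q/p$) by $o(1)$, uniform continuity of $H$ on $[0,1]$ gives $\sum_i m_iH(\ell_i/m_i)=nH(q/n)+o(n)$ and $\sum_i a_iH(\ell_i/a_i)=pH(q/p)+o(n)$ (the errors are $o(n)$ since $\sum_i m_i=n$ and $\sum_i a_i=p\le n$); combined with $\log\binom nq=nH(q/n)+o(n)$ and $\log\binom pq=pH(q/p)+o(n)$ this yields the claim.

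The main obstacle is this last estimate, $\prod_i\kappa(m_i,a_i,\ell_i)\le\kappa(n,p,q)\cdot 2^{o(n)}$. Getting it to work forces several design choices to be made consistently: the growth rate of $b$ (large enough that $3^{n/b}=2^{o(n)}$, yet small enough that $n^{O(b)}$ and the count of balanced triples stay subexponential), the definition of ``balanced'' (the deviation $\sqrt{bn}$ must exceed the Chebyshev threshold yet be $o(n/b)$ so the bucket-wise ratios genuinely concentrate), and the preliminary carve-out of small $q$ (where those ratios cannot be controlled but $\binom nq$ is already subexponential). Finally, since $q/n$ and $q/p$ may tend to $0$ or $1$, the entropy approximation of binomial coefficients should be pushed through via uniform continuity of $H$ rather than via Lipschitz bounds, which fail near the endpoints.
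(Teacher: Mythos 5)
Your proposal is correct and follows essentially the same approach as the paper's proof: hash $U$ into $b=\lceil\log n\rceil$ buckets via a pairwise independent family from Proposition~\ref{prop:pairInd}, invoke the slow construction of Lemma~\ref{lem:slowBalancedUniversal} in each bucket (where the universe size has dropped to $O(n/\log n)$ so $3^{n/b}=2^{o(n)}$), glue transversal unions, and prove correctness with Chebyshev plus a union bound, with the size analysis reducing to $\prod_i\kappa(m_i,a_i,\ell_i)\le\kappa(n,p,q)\cdot 2^{o(n)}$. The only differences are presentational refinements — you carve out the small-$q$ regime where $\binom{n}{q}$ is already subexponential, and you choose the $\ell_i$'s to sum exactly to $q$ (avoiding the paper's auxiliary deletion set $D$ of size $\le b$, which is introduced precisely to correct the ceiling errors in $\lceil\beta p_i\rceil$), and you run the $\kappa$-product bound through entropy and uniform continuity rather than the explicit binomial estimates of Equation~\eqref{eqn:binBound}, but the decomposition, correctness argument, and analysis are all the paper's.
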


\begin{proof}
The construction sets $\beta = {q}/{p}$, selects a number $b = \lceil \log n \rceil$ of {\em buckets} and applies Proposition~\ref{prop:pairInd} to construct a pairwise independent family ${\cal X}$ of functions from $U$ to $[b]$. For each function $f \in {\cal X}$ and integer $i \in [b]$ we set $U_f^i = \{u \in U : f(u) = i\}$ and $n_f^i = |U_f^i|$. Call a function $f$ {\em good} if, for every $i \in [b]$ we have that $|n_f^i - {n}/{b}| \leq \sqrt{n} \cdot b$. For every good function $f \in {\cal X}$, every $i \in [b]$ and every integer $s \leq n_f^i$ we construct an \sepfam{n_f^i}{s}{\lceil \beta s \rceil} ${\cal C}_{f}^{(i,s)}$ using Lemma~\ref{lem:slowBalancedUniversal}. 
We now describe the family ${\cal C}$ output by the construction. Each set $Y \in {\cal C}$ is defined by 
\begin{enumerate}\setlength\itemsep{-4pt}
\item a good $f \in \cal X$, 
\item a sequence $p_1, \ldots ,p_b$ of integers such that $|p_i - \frac{p}{b}| \leq \sqrt{n} \cdot b$,
\item a sequence $Y_1, \ldots , Y_b$ of sets with $Y_i \in {\cal C}_{f}^{(i,p_i)}$,
\item a set $D \subseteq U$ of size at most $b$.
\end{enumerate}
The set $Y$ defined by the tuple $(f, p_1, \ldots ,p_b, Y_1, \ldots ,Y_b, D)$ is set to $Y = (\bigcup_{i \leq b} Y_i) \setminus D$.\longversion{ This concludes the construction.}

First we analyze the running time of the construction. Constructing the set ${\cal X}$ takes polynomial time by Proposition~\ref{prop:pairInd}. For each good $f$,  $i \in [b]$ and $s \leq n_f^i$, constructing ${\cal C}_{f}^{(i,s)}$ using Lemma~\ref{lem:slowBalancedUniversal} takes time $2^{o(n)}$ because $n_i \leq \frac{n}{b} + \sqrt{n} \cdot \log n = \Oh(\frac{n}{\log n})$. There are $\Oh(n^2)$ choices for $f$, at most $\Oh(\log n)$ choices for $i$ and $\Oh(\frac{n}{\log n})$ choices for $s$. Thus, the overall time of the construction is $2^{o(n)}$ plus the time to output ${\cal C}$. Outputting ${\cal C}$ can be done spending polynomial time for each set $Y \in {\cal C}$ by enumerating over all the tuples $(f, p_1, \ldots ,p_b, Y_1, \ldots ,Y_b, D)$. Thus, the running time of the construction is upper bounded by $2^{o(n)} + |{\cal C}| \cdot n^{\Oh(1)}$. It remains to upper bound $|{\cal C}|$.

The size of ${\cal C}$ is upper bounded by the number of tuples $(f, p_1, \ldots ,p_b, Y_1, \ldots ,Y_b, D)$. There are $\Oh(n^2)$ choices for $f,$ at most $n^{b}$ choices for $p_1, \ldots, p_b$ and $n^{\Oh(b)}$ choices for $D$. Thus, the number of tuples is upper bounded by $2^{o(n)}$ times the maximum number of choices for $Y_1 \ldots Y_b$ for any fixed choice of  $f,$ $p_1 \ldots p_b$ and $D$. For each $i$, we choose $Y_i$ from ${\cal C}_{f}^{(i,p_i)}$, so there are $\kappa(n_f^i, p_i, \lceil \beta p_i \rceil) \cdot n^{\Oh(1)}$ choices for $Y_i$. It follows that the total number of choices for $Y_1, \ldots ,Y_b$ is upper bounded by
\begin{align}\label{eqn:derandSizebound1}
\shortversion{\textstyle}
\prod_{i \leq b} \kappa(n_f^i, p_i, \lceil \beta p_i \rceil) \cdot n^{\Oh(1)} \leq 2^{o(n)} \cdot \prod_{i \leq b}
\shortversion{{{n_f^i \choose \lceil \beta p_i \rceil}}/{{p_i \choose \lceil \beta p_i \rceil}}.}
\longversion{\frac{{n_f^i \choose \lceil \beta p_i \rceil}}{{p_i \choose \lceil \beta p_i \rceil}}.}
\end{align}
Now, we have that
\begin{align}\label{eqn:upperprod}
\shortversion{\textstyle}
\prod_{i \leq b} {n_f^i \choose \lceil \beta p_i \rceil} \leq \prod_{i \leq b} \left[ {\lceil n/b \rceil \choose \lceil p/b \rceil} \cdot n^{\Oh(\sqrt{n}\log n)}\right] \leq {n \choose p} \cdot 2^{o(n)}
\end{align}
In the last transition we used that the number of ways to pick $b$ sets of size $\lceil p/b \rceil$, each from a universe of size $\lceil n/b \rceil$ is upper bounded by the number of ways to pick a set of size $b \cdot \lceil p/b \rceil$ from a universe of size $b \cdot \lceil n/b \rceil$. This in turn is upper bounded by ${n \choose p} \cdot 2^{o(n)}$. Furthermore, 
\begin{align}\label{eqn:lowerprod}
\shortversion{\textstyle}
\prod_{i \leq b} {p^i \choose \lceil \beta p_i \rceil} & \geq \prod_{i \leq b} \left[{\lceil p/b \rceil \choose \lceil \beta (p/b) \rceil} \cdot n^{-\Oh(\sqrt{n}\log n)}\right]
\longversion{\nonumber \\
 & \geq \left[\left(\beta^{-\beta}(1-\beta)^{\beta - 1}\right)^{(p/b)} \cdot n^{-\Oh(1)}\right]^b \cdot 2^{-o(n)} \\&
}
\geq {p \choose q} \cdot 2^{-o(n)}\longversion{\nonumber}
\end{align}
\longversion{%
	Here the two last transitions use Equation~\ref{eqn:binBound}.%
}
Inserting the bounds from \eqref{eqn:upperprod} and~\eqref{eqn:lowerprod} into \eqref{eqn:derandSizebound1} yields that the total number of choices for $Y_1, \ldots ,Y_b$ is upper bounded by $\kappa(n,p,q) \cdot 2^{o(n)}$ and thus, $|{\cal C}| \leq \kappa(n,p,q) \cdot 2^{o(n)}$ as well.

All that remains is to argue that ${\cal C}$ is in fact an \sepfam{n}{p}{q}. Towards this, consider any subset $S$ of $U$ of size exactly $p$. For any fixed $i \in [b]$, consider the process of picking a random function $f$ from ${\cal X}$. We are interested in the random variables $|U_f^i|$ and $|U_f^i \cap S|$. Using indicator variables for each element in $U$ it is easy to show that 
\mymath{
 \underset{f \in {\cal X}}{ {\rm E}}\left[|U_f^i|\right] = \frac{n}{b} \mbox{ and } \underset{f \in {\cal X}}{ {\rm E}}\left[|U_f^i \cap S|\right] = \frac{p}{b}.
}
Furthermore, ${\cal X}$ is pairwise independent, and therefore the covariance of any pair of indicator variables is $0$. Thus, $\underset{f \in {\cal X}}{ {\rm Var}}\left[|U_f^i|\right] \leq n$ and $\underset{f \in {\cal X}}{ {\rm Var}}\left[|U_f^i \cap S|\right] \leq n$. By Chebyshev's inequality it follows that
\mymath{
 \Pr\left[\big||U_f^i| - \frac{n}{b}\big| \geq \sqrt{n} \cdot b\right] \leq \frac{1}{b^2} \mbox{ and } \Pr\left[\big||U_f^i \cap S| - \frac{p}{b}\big| \geq \sqrt{n} \cdot b\right] \leq \frac{1}{b^2}.
}
Consider now the probability that at least one of the variables $|U_i|$ or $|U_i \cap S|$ deviates from its expectation by at least $\sqrt{n} \cdot b$. Combining the above inequalities with the union bound taken over all $i \in [b]$ yields that this probability is upper bounded by $2b \cdot \frac{1}{b^2} \leq \frac{2}{b}$. Since $b = \log n > 2$ we have that with non-zero probability, {\em all} the random variables $|U_1|, \ldots , |U_b|$ and $|U_1 \cap S|, \ldots , |U_b \cap S|$ are within $\sqrt{n} \cdot b$ of their respective means. Thus there exists a function $f \in {\cal X}$ such that for every $i \in [b]$ we have
\mymath{
 \big||U_f^i| - \frac{n}{b}\big| \leq \sqrt{n} \cdot b \mbox{ and } \big||U_f^i \cap S| - \frac{p}{b}\big| \leq \sqrt{n} \cdot b.
}
\longversion{In the remainder of the proof let}\shortversion{Let} $f$ be such a function in ${\cal X}$.

The choice of $f$ implies that $f$ is a good function. For each $i \leq b$, let $S_i = |U_f^i \cap S|$ and $p_i = |S_i|$. Again, by the choice of $f$ we have that $|p_i - \frac{p}{b}| \leq \sqrt{n} \cdot b$. Since ${\cal C}_{f}^{(i,p_i)}$ is an \sepfam{n_f^i}{p_i}{\lceil \beta p_i \rceil}, there exists a set $Y_i \in {\cal C}_{f}^{(i,p_i)}$ such that $Y_i \subseteq S_i$ and $|Y_i| = \lceil \beta p_i \rceil$. For each $i \in [b]$ select such a $Y_i$ from ${\cal C}_{f}^{(i,p_i)}$. Finally let $D$ be any subset of $\bigcup_{i \leq b} Y_i$ of size $\sum_{i \leq b} |Y_i| - q$. Note that $|Y_i| \leq \beta p_i + 1$, thus $\sum_{i \leq b} |Y_i| - q \leq b$, so $|D| \leq b$. Consider finally the tuple $(f, p_1 \ldots p_b, Y_1, \ldots Y_b, D)$. We have just proved that this tuple satisfies all of the conditions for giving rise to a set $Y = \bigcup_{i \leq b} Y_i \setminus D$ in ${\cal C}$. However, $Y_i \subseteq S_i$ for all $i$, so $Y \subseteq S$, proving that ${\cal C}$ is a \sepfam{n}{p}{q}.
\end{proof}

\section{Conclusion and Discusison}\label{sec:concl}

In this paper we have shown that for many subset problems, an algorithm that finds a solution of size $k$ in time $c^kn^{\Oh(1)}$ directly implies an algorithm with running time $\Oh((2-\frac{1}{c})^{n+o(n)})$. We also show that often, an upper bound of $c^kn^{\Oh(1)}$ on the number of sets of size at most $k$ in a family ${\cal F}$ can yield an upper bound of $\Oh((2-\frac{1}{c})^{n+o(n)})$ on the size of ${\cal F}$.
Our results reveal an exciting new connection between parameterized algorithms and exponential-time algorithms.
All of our algorithms have a randomized and a deterministic variant. The only down-side of using the deterministic algorithm rather than the randomized one is a $2^{o(n)}$ multiplicative factor in the running time, and an additional $2^{o(n)}$ space requirement. It is possible to reduce the space overhead to a much smaller (but still super-polynomial) term, however this would make the presentation considerably more involved.

For the enumeration algorithm of Theorem~\ref{thm:main4}, it is well worth noting that the algorithm only uses subexponential space if the algorithm is allowed to output the same set multiple times. If duplicates are not allowed the algorithm needs exponential space in order to store a trie 
of the sets that have already been output.
Another approach is to use an output-sensitive algorithm. For example, there is a polynomial-delay polynomial-space algorithm enumerating all feedback vertex sets in a tournament \cite{GaspersM13}, and its running time is $O(1.6667^n)$ by our combinatorial upper bound.

Our analysis also reveals that in order to obtain a $(2-\epsilon)^n$ time algorithm with $\epsilon > 1$ for a subset problem, it is sufficient to get a $\Oh(c^k{n-|X| \choose k}^{1-\delta})$ algorithm for any constant $c$ and $\delta > 0$ for the extension problem. This might be a promising route for obtaining better exact exponential-time algorithms for problems that currently do not have single-exponential-time parameterized algorithms. For example, it would be interesting to see whether it is possible to improve on Razgon's $\Oh(1.9977^n)$ time algorithm~\cite{Razgon07} for {\sc Directed Feedback Vertex Set} by designing a $\Oh(c^k{n-|X| \choose k}^{1-\delta})$ time algorithm.


\longversion{
 \medskip\noindent\textbf{Acknowledgements.} 
 Many thanks to Russell Impagliazzo and  Meirav Zehavi for insightful discussions.
 The research leading to these results has received funding from the European Research Council under the European Union's Seventh Framework Programme (FP/2007-2013) / ERC Grant Agreements n. 267959 and no. 306992.
 NICTA is funded by the Australian Government through the Department of Communications and the Australian Research Council (ARC) through the ICT Centre of Excellence Program.
 Serge Gaspers is the recipient of an ARC Future Fellowship (project number FT140100048) and acknowledges support under the ARC's Discovery Projects funding scheme (project number DP150101134). Lokshtanov is supported by the Beating Hardness by Pre-processing grant under the recruitment programme of the of Bergen Research Foundation.
}

\shortversion{\newpage}

\bibliographystyle{siam}
\bibliography{book_pc,references}

 \newpage
 \newpage
\appendix

\section{Problem definitions}
\label{sec:problems}

We list the definitions of the problems considered in this paper.

\smallskip
\defparproblem{\sc Feedback Vertex Set}{An undirected graph $G$ and a positive integer $k$.}{$k$}{Does there exist a subset $S\subseteq V(G)$ of size at most $k$ such that $G-S$ is acyclic?}

\defparproblem{\sc Weighted Feedback Vertex Set}{An undirected graph $G$, a positive integer $k$, a weight function $w:V(G)\rightarrow \mathbb{N}$, and a positive integer $W$.}{$k$}{Is there a set $S\subseteq V(G)$ of size at most $k$ and weight at most $W$ such that $G-S$ is acyclic?}

\defparproblem{\sc Subset Feedback Vertex Set}{An undirected graph $G$, a vertex subset $T\subseteq V(G)$, and a positive integer $k$.}{$k$}{Does there exist a subset $S\subseteq V(G)$ of size at most $k$ such that $G-S$ has no cycle that contains a vertex from $T$?}
\smallskip
%

Let $\Gamma$ be a finite group with identity element $1_\Gamma$.
A $\Gamma$-labeled graph is a graph $G=(V,E)$ with a labeling $\lambda: E \rightarrow \Gamma$ such that $\lambda(u,v)\lambda(v,u)=1_\Gamma$ for every edge $uv\in E$.
For a cycle $C=(v_1,\dots,v_r,v_1)$, define $\lambda(C) = \lambda(v_1,v_2)\cdot \dots \cdot \lambda(v_r,v_1)$.

\smallskip
\defparproblem{\sc  Group Feedback Vertex Set}{A group $\Gamma$, a $\Gamma$-labelled graph $(G,\lambda)$, and a positive integer $k$.}{$k$}{Does there exist a subset $S\subseteq V(G)$ of size at most $k$ such that every cycle $C$ in $G-S$ has $\lambda(C)=1_\Gamma$?}
\smallskip

\defparproblem{\textsc{Node Unique Label Cover}}{An undirected graph $G=(V,E)$, a finite alphabet $\Sigma$, an integer $k$, and for each edge $e\in E$ and each of its endpoints $v$ a permutation $\psi_{e,v}$ of $\Sigma$ such that if $e=xy$ then $\psi_{e,x} = \psi_{e,v}^{-1}$}{$|\Sigma|+k$}{Is there a vertex subset $S\subset V$ of size at most $k$ and a function $\Psi: V\setminus S \rightarrow \Sigma$ such that for every edge $uv\in E(G-S)$ we have $(\Psi(u),\Psi(v))\in \psi_{uv,u}$?}

For  fixed integers $r,\ell \geq 0$, a graph $G$ is called an {\em \abGraph} if the vertex set $V(G)$ can be partitioned into $r$ independent sets and $\ell$ cliques.

\smallskip
\defparproblem{{\abpartization}}{A graph $G$ and a positive integer $k$}{$k$}{Is there a vertex subset $S\subseteq V(G)$ of size at most $k$ such that  
	$G-S$ is an \abGraph?} 

\smallskip
Several special cases of this problem are well known and have been widely studied. For example, $(2,0)$- and  $(1,1)$-graphs correspond to bipartite graphs and split graphs respectively.
We note that \abpartization\ can be solved in $O(1.1996^{(r+\ell)\cdot n})$ by taking $r$ copies of the input graph, $\ell$ copies of its complement, making all the copies of a same vertex into a clique and computing a maximum independent set of this graph using the algorithm from \cite{XiaoN13}.
This is faster than $O(2^n)$ when $r+\ell\le 3$.
We improve on this algorithm for $r,\ell\le 2$ and $r+\ell\ge 3$.

For the definition of graph classes, including interval graphs, proper interval graphs, block graphs, cluster graphs, we refer to \cite{BrandstadtLS99}.

\defparproblem{\sc Proper Interval Vertex Deletion}{An undirected graph $G$ and a positive integer $k$.}{$k$}{Does there exist a subset $S\subseteq V(G)$ of size at most $k$ such that $G-S$ is a proper interval graph?}

\defparproblem{\sc Interval Vertex Deletion}{An undirected graph $G$ and a positive integer $k$.}{$k$}{Does there exist a subset $S\subseteq V(G)$ of size at most $k$ such that $G-S$ is an interval graph?}

\defparproblem{\sc  Block Graph Vertex Deletion}{An undirected graph $G$ and a positive integer $k$.}{$k$}{Does there exist a subset $S\subseteq V(G)$ of size at most $k$ such that $G-S$ is a block graph?}

\defparproblem{\sc  Cluster Vertex Deletion}{An undirected graph $G$ and a positive integer $k$.}{$k$}{Does there exist a subset $S\subseteq V(G)$ of size at most $k$ such that $G-S$ is a cluster graph?}

\defparproblem{\sc Thread Graph Vertex Deletion} {An undirected graph $G$ and a positive integer $k$.}{$k$}{Does there exist a subset $S\subseteq V(G)$ of size at most $k$ such that $G-S$ is  of linear rank-width one?}

\defparproblem{\sc Multicut on Trees}{A tree $T$ and a set $\mathcal{R} = \{ \{s_1,t_1\}, \ldots, \{s_r,t_r\}\}$ of pairs of vertices of $T$ called terminals, and a positive integer $k$.}{$k$}{Does there exist a subset 
	$S\subseteq E(T)$ of size at most $k$ whose removal disconnects each $s_i$ from $t_i$, $i\in [r]$?}

\defparproblem{\sc $d$-Hitting Set}{A family $\mathscr S$ of subsets of size at most $d$ of a universe ${\cal U}$  and a positive integer $k$.}{$k$}{Does there exist a subset $S\subseteq \cal U$ of size at most $k$ such that  $F$ is a hitting set for 
	$\mathscr S$?}

\defparproblem{\sc Weighted $d$-Hitting Set}{A family $\mathscr S$ of subsets of size at most $d$ of a universe ${\cal U}$, a weight function $w:\mathcal{U}\rightarrow \mathbb{N}$, and positive integers $k$ and $W$.}{$k$}{Does there exist a subset $S\subseteq \cal U$ of size at most $k$ and weight at most $W$ such that  $F$ is a hitting set for $\mathscr S$?}

\defparproblem{\sc Min-Ones $d$-Sat}{A propositional formula $F$ in conjunctive normal form where each clause has at most $d$ literals and an integer $k$.}{$k$}{Does $F$ have a satisfying assignment with Hamming weight at most $k$?}

\defparproblem{\sc Weighted $d$-Sat}{A propositional formula $F$ in conjunctive normal form where each clause has at most $d$ literals, a weight function $w:var(F)\rightarrow \mathbb{Z}$, and integers $k$ and $W$.}{$k$}{Is there a set $S\subseteq var(F)$ of size at most $k$ and weight at most $W$ such that $F$ is satisfied by the assignment that sets the variables in $S$ to $1$ and all other variables to $0$?}

\defparproblem{\sc Tournament Feedback Vertex Set}{A tournament $G$ and a positive integer $k$.}{$k$}{Does there exist a subset $S\subseteq V(G)$ of size at most $k$ such that $G-S$ is a transitive 
tournament?}

\defparproblem{\sc  Split Vertex Deletion}{An undirected graph $G$ and a positive integer $k$.}{$k$}{Does there exist a subset $S\subseteq V(G)$ of size at most $k$ such that $G-S$ is a split graph?}

\defparproblem{\sc  Cograph Vertex Deletion}{An undirected graph $G$ and a positive integer $k$.}{$k$}{Does there exist a subset $S\subseteq V(G)$ of size at most $k$ such that $G-S$ is a cograph?}
%

\defparproblem{\sc Directed Feedback Vertex Set}{A directed graph $G$ and a positive integer $k$.}{$k$}{Does there exist a subset $S\subseteq V(G)$ of size at most $k$ such that $G-S$ is directed acyclic graph?}

%
%

\shortversion{
\section{Omitted Proofs}
\label{sec:omittedProofs}

In this section of the appendix, we provide the proofs that are omitted from the main text.

\lemmaTechnical
\proofLemTechnical

\thmComb*
\proofThmComb

\lemSlowBalancedUniversal*
\proofLemSlowBalancedUniversal

\section{Extension to Permissive FPT Subroutines}
\label{subsec:permissive}
\secPermissive
}

\end{document}